\documentclass[11pt,a4paper]{article}

\usepackage{fullpage}
\usepackage{amsmath,amsfonts,amssymb}
\usepackage{hyperref}

\usepackage[noend]{algpseudocode}
\usepackage{algorithm}
\usepackage{color}
\usepackage[noadjust]{cite}
\usepackage{tikz}
\usepackage{url}
\usepackage{pgfplots}
\usepackage{blindtext}
\pgfplotsset{compat=1.5}
\newcommand{\tikzscale}{1}

\newtheorem{theorem}{Theorem}
\newtheorem{definition}{Definition}
\newtheorem{lemma}{Lemma}
\newtheorem{fact}{Fact}
\newenvironment{proof}{\medskip\textit{Proof.}}{\hfill$\square$}

\newcommand{\bc}{\mathsf{bc}}
\newcommand{\lc}{\mathsf{lc}}
\newcommand{\out}{\mathsf{out}}
\DeclareMathOperator{\dist}{\mathsf{dist}}
\DeclareMathOperator{\diam}{\mathsf{diam}}
\DeclareMathOperator{\Diam}{\mathsf{Diam}}
\DeclareMathOperator{\minhop}{\mathsf{minhop}}
\DeclareMathOperator{\maxhop}{\mathsf{maxhop}}
\renewcommand{\deg}{\mathsf{deg}}
\newcommand{\nh}{\mathsf{NH}}
\newcommand{\varnh}{\mathrm{NH}}
\newcommand{\ph}{\mathsf{PH}}
\newcommand{\varph}{\mathrm{PH}}
\newcommand{\congest}{\textsc{congest}}

\begin{document}

\title{Simple and Fast Distributed Computation of \\ Betweenness Centrality}

\author{
	Pierluigi Crescenzi\thanks{On leave from DiMaI, Universit\`a degli Studi di Firenze, I-50134 Firenze, Italy.} \\
	{\small Universit\'e de Paris}\\ {\small IRIF, CNRS}
	\and
	Pierre Fraigniaud\thanks{Supported by ANR projects DESCARTES and FREDA and INRIA project GANG.} \\ 
	{\small Universit\'e de Paris}\\ {\small IRIF, CNRS}
	\and 
	Ami Paz\thanks{Supported by the Fondation Sciences Math\'ematiques de Paris (FSMP).} \\ 
	{\small Faculty of Computer Science}\\ {\small Univarsity of Vienna}
}

\date{}

\maketitle
%% SHOW PAGE NUMBERS
%\thispagestyle{plain}
%\pagestyle{plain}
%%%%%%%%%%%%%%%%%%%

\begin{abstract}
Betweenness centrality is a graph parameter that has been successfully applied to network analysis. In the context of computer networks, it was considered for various objectives, ranging from routing to service placement. However, as observed by Maccari et al. [INFOCOM 2018], research on betweenness centrality for improving protocols was hampered by the lack of a usable, fully distributed algorithm for computing this parameter. We resolve this issue by designing an efficient algorithm for computing betweenness centrality, which can be implemented by minimal modifications to any distance-vector routing protocol based on Bellman-Ford. The convergence time of our implementation is shown to be proportional to the diameter of the network. 
\end{abstract}

\section{Introduction}

Betweenness centrality~\cite{Freeman97} is a measure of ``importance'' attributed to every node of a graph. Roughly, the betweenness centrality of a node~$v$ is the sum, taken over all pairs $(s,t)$ of source-target nodes, of the ratio between the number of shortest paths from $s$ to $t$ passing through~$v$, and the total number of shortest paths from $s$ to $t$. Thus, a node with high betweenness centrality belongs to relatively many shortest paths, while a node with low betweenness centrality belongs to relatively few shortest paths. Betweenness centrality has been successfully applied to network analysis: In social networks, a node with high betweenness centrality is plausibly an influential node; in computer networks, a node with high betweenness centrality might be efficiently used  for storing relevant resources, but may also cause severe damage to the communications in case of failure or malfunction. 

Consequently, betweenness centrality and its variants have been used for optimizing the behavior of communication networks and computer networks~\cite{KDT2010}, whether it is for wireless mesh networks design~\cite{KasAW+12}, routing~\cite{DolevEP10}, link-sensing~\cite{MaccariC16}, resource placement~\cite{PantazopoulosKS14} and allocation~\cite{ZilbermanPE17}, topology control~\cite{RodasL15}, transmission rates optimization~\cite{BaldesiMC17}, or security~\cite{MaccariC14}. For instance, the optimal frequency at which the incident links must be sensed at each node is known to be inversely proportional to the square root of the betweenness centrality of the node~\cite{MaccariC16}.

Nevertheless, as observed by Maccari et al.~\cite{MaccariGGMC18}, network optimization techniques based on betweenness centrality suffer from two main issues. First, even if every node has access to information about the whole network, which is the case in link-state protocols, the computation of the betweenness centrality may require excessive computational resources; hence, various heuristics and random sampling techniques were proposed in order to reduce the computation time~\cite{BaderKMM07,BaglioniGPL12,BrandesP07,GeisbergerSS08,JacobKL+05,Lim2011,MaccariNC16,MaiyaB10,MOG19,PuzisEZDB15,RiondatoK16,RiondatoU18}. Second, there are no known efficient algorithms for computing betweenness centrality in the context of distance-vector protocols. Existing distributed algorithms for computing betweenness centrality or all-pairs shortest paths are either designed for models that are too weak compared to real-world networks supporting distance-vector protocols (e.g., \congest\/ model)~\cite{HoangPDGYPR19}, or dedicated to restricted classes of network topology (e.g., DAGs or trees)~\cite{YouTQ17,WT13,WT14}, or they exchange an amount of information between nodes that exceed the capacity of distance-vector protocols~\cite{ZK03}. Actually, even the elegant and practical algorithm by Maccari et al.~\cite{MaccariGGMC18} for computing a related measure called load centrality~\cite{Brandes08,GKK01} exchanges slightly more information between nodes than one would expect from a distance-vector protocol. 

\subsection{Our Results}

We describe a simple and fast distributed algorithm for computing betweenness centrality. Specifically, our algorithm enables every node~$v$ to compute its own centrality~$\bc_v$. 

Our algorithm is simple in the sense that it can be implemented by minimal modifications to distance-vector protocols based on Bellman-Ford. Concretely, Bellman-Ford asks every node~$v$ to send to each of its neighbors a pair of values $(t,d)$ for every target-node~$t$, where $d$ is the current distance from $v$ to~$t$, as perceived by~$v$. Our algorithm simply asks every node to send to each neighbor a quadruple of values $(t,d,s,b)$ for every target-node~$t$, where $s$ is the current estimation at $v$ of the number of shortest paths from $v$ to $t$, and $b$ is the current contribution of $t$ to the betweenness centrality of~$v$. 

Our algorithm is fast in the sense that it converges in a number of distance-vector phases proportional to the diameter of the network. Moreover, the amount of computations performed at each node~$v$ upon reception of a message from a neighbor~$u$ related to a target~$t$ is (amortized) constant, i.e., independent of the size of the network.  

We have performed an extensive set of simulations confirming both the correctness analysis of our algorithm, and its efficiency. We have considered different scenarios, including weighted and unweighted networks, and various topologies generated by synthetic models (grids, Erd\"os-R\'enyi, etc.) or extracted from real-world networks. 

The main outcome of this paper is that betweenness centrality can be efficiently computed at every node in a distributed manner, even in the context of distance-vector protocols. As a consequence, there is no obstacle for using betweenness centrality for optimizing the functionality of networks, as far as computing this parameter at run-time by the network itself is concerned. This resolves a question left open in the work of Maccari et al.~\cite{MaccariGGMC18}.

\subsection{Related Work}
\label{subsec:RW}

In addition to the aforementioned contributions to computing or approximating betweenness centrality, significant effort has been made by the distributed computing community for efficiently computing related graph measure, the main one being all-pairs shortest paths (APSP). The vast majority of the results in the distributed setting were however derived under the so-called \congest\/ model~\cite{Peleg2000}, in which failure-free processing nodes perform  in a sequence of synchronously rounds, and the message traversing any link at any round must carry a constant number of words only  (i.e., the messages are of $O(\log n)$ bits in $n$-node networks). In contrast, distance-vector protocols can be viewed as exchanging $\Theta(n)$ words along each edge at each phase of communication between neighbors, each word consisting of a pair $(t,d)$ of values. Nevertheless, it is worth mentioning previous contributions~\cite{HoangPDGYPR19,PR2018,HuaFAQLSJ2016}, computing betweenness centrality in unweighted graphs (directed or not) in $O(n)$ rounds under the \congest\/ model. Some of these results also apply to weighted digraphs, but to DAGs only~\cite{PR2018}.

Computing all-pairs shortest paths in the \congest\/ model was intensively studied. In unweighted graphs, several $O(n)$-round algorithms exist~\cite{HolzerW2012,PelegRT2012,LenzenP2013}, and these were improved by an $O(n/\log n)$-rounds algorithm~\cite{HuaFQALSJ2016},
which is tight~\cite{FrischknechtHW2012}.
In the weighted case, an $\tilde O(n)$-rounds algorithm was recently presented~\cite{BernsteinN2018},
almost matching the $\Omega(n)$ lower bound~\cite{CensorKP2017}.

Finally, the computation of betweenness centrality was studied in the context of dynamic graphs~\cite{BergaminiMS15,BergaminiM15,HayashiAY15,KourtellisMB15}. These works makes it possible to maintain the betweenness centralities of the nodes in a changing system, without having to recompute everything from scratch when a change occurs (e.g., adding or removing an edge). However, these algorithms are centralized and not distributed. 
Note that in the distance-vector model, maintaining even just the distances in a changing network (APSP) is far from being trivial (see, e.g.,~\cite[Section 4.2.2]{PetersonD2013}).

\section{Definitions}

Let $G=(V,E)$ be a connected undirected graph with positive edge-weights. The weight of an edge $e\in E$ is denoted by $w(e)>0$. A path between two distinct nodes $s,t\in V$ is a sequence $v_0,\dots,v_k$ with $k\geq 0$, $v_0=s$, $v_k=t$, and $\{v_{i-1},v_i\}\in E$ for every $i=1,\dots,k$. The length of such a path is 
$
\sum_{i=1}^k w(\{v_{i-1},v_i\}). 
$
A path with minimum length between $s$ and~$t$ is called a shortest path between $s$ and~$t$. The distance $\dist(s,t)$ between two nodes $s$ and $t$ is the length of a shortest path between $s$ and~$t$. The weighted diameter of $G=(V,E)$ is defined as 
$
\max_{s,t\in V}\dist(s,t).
$
The weighted diameter is however not reflecting the convergence time of routing protocols based on Bellman-Ford. The complexity of the latter is indeed related to the hop-diameter of~$G$ defined as follows. For any two nodes $s\neq t$, let $P_1,\dots,P_\ell$ be the $\ell\geq 1$ shortest paths between $s$ and $t$ in~$G$. Let $\minhop(s,t)$ be the minimum, taken over all $i=1,\dots,\ell$, of the number of edges of~$P_i$. The hop-diameter of $G$ is set as
\[
\diam(G)=\max_{s,t\in V}\minhop(s,t).
\]
The standard distributed version of Bellman-Ford enables every node $v$ to compute $\dist(v,t)$ for every $t\in V$ --- see  Algorithm~\ref{algoBF}. Here and later, we use different notations to distinguish the defined value (e.g. $\dist$), and the value computed by the algorithm (e.g. $D$). This algorithm converges in $\diam(G)$ phases, where a phase is defined as the time required by every node~$v$ to send all the messages $(t,D[t])$, $t\in V$, to all its neighbors, receive all the messages $(t,d)$, $t\in V$, sent by each of its neighbors, and process all these messages. Indeed, a simple induction on $h\geq 0$ enables to show that, for every $t\in V$, every node $v$ with $\minhop(v,t)\leq h$ has computed $\dist(v,t)$ correctly after $h$ rounds. 

\begin{algorithm}[tb]
\caption{Distributed Bellman-Ford at node $v$}
\label{algoBF}
\begin{algorithmic}[1]

\Function{init}{} 
\State \textbf{forall} $t\in V$  \textbf{do} $D[t] \leftarrow +\infty$ \Comment{\emph{$D$ is distance vector}}
\State $D[v] \leftarrow 0$ \Comment{\emph{$\dist(v,v)=0$}} 
\EndFunction

\medskip

\Function{send}{} 
\Loop \Comment{\emph{periodic updates are sent to neighbors}}
	\State \textbf{forall} $t\in V$ \textbf{do} send $(t,D[t])$ to every neighbor $u$
\EndLoop	 
\EndFunction 
	
\medskip

\Function{receive}{message $(t,d)$ from neighbor $u$} 
\If{$d+w(\{u,v\})<D[t]$} 
	\State $D[t]\leftarrow d+w(\{u,v\})$
\EndIf
\EndFunction 
\end{algorithmic}
\end{algorithm}

For any two vertices $s\in V$ and $t\in V$, let $\sigma_{s,t}$ denote the number of shortest paths from $s$ to $t$ in~$G$ (with $\sigma_{s,s}=1$), and let $\sigma_{s,t}(v)$ denote the number of shortest paths from $s$ to $t$ passing through node~$v$ (with $\sigma_{s,s}(s)=1$).

\begin{definition}\label{def:bc}
The \emph{betweenness centrality}~\cite{Freeman97} of node $v$ is 
\[
\bc_v=\frac{1}{(n-1)(n-2)}\sum_{s\neq v, t\neq v}\frac{\sigma_{s,t}(v)}{\sigma_{s,t}}. 
\]
\end{definition}

Note that, since $G$ is connected, $\sigma_{s,t}>0$ for every two vertices $s,t\in V$, and thus $\bc_v$ is well defined for every $v\in V$. 

As mentioned earlier in the text, betweenness centrality is not only a central notion in the context of network analysis, but can also be used for various optimization scenarios in the context of computer networks. In particular, Maccari and Cigno~\cite{MaccariC16} focused on optimizing the frequency of \textsc{hello} messages for link-sensing in wireless networks, and showed that the optimal frequency $f(v)$ at which every node~$v$ must sense its neighbors is 
\begin{equation}\label{eq:freq}
f(v) \approx \sqrt\frac{\deg_v}{\bc_v},
\end{equation}
where $\deg_v$ denotes the degree of~$v$, i.e., its number of neighbors. Unfortunately, the above formula can hardly be practically used in absence of an efficient way of computing $\bc_v$ at every node~$v$. To overcome this,  Maccari et al.~\cite{MaccariGGMC18} have proposed to replace the use of betweenness centrality by the use of load centrality~\cite{Brandes08}, defined as follow. 

Assume that one unit of flow is pushed from $s$ to $t$ in~$G$ according to the following rule. For any node~$v$, let $\out_{s,t}(v)$ denote the set of edges $e=\{v,v'\}$ incident to $v$ such that there exists a shortest path from $s$ to $t$ traversing the edge $e$ from $v$ to $v'$. If a fraction $r$ of the flow from $s$ to $t$ is received by a node $v\neq t$, and if $|\out_{s,t}(v)|=k>0$, then $v$ forwards a fraction $r/k$  of flow along each of the edges in $\out_{s,t}(v)$. Let $\theta_{s,t}(v)$ denote the amount of flow from $s$ to $t$ traversing~$v$. 

\medskip

\begin{definition}
The \emph{load centrality}~\cite{Brandes08} of node~$v$ is  
\[
\lc_v=\sum_{s,t}\theta_{s,t}(v). 
\]
\end{definition}

A distributed algorithm for computing load centrality has been described and analyzed in~\cite{MaccariGGMC18}. It was shown to be implementable by minimal modifications to the Bellman-Ford algorithm. Roughly, instead of every node $v$ sending just a pair $(t,D[t])$ for every node~$t$, where $D[t]$ is the current estimation of the distance between $v$ and~$t$ as perceived by~$v$, every node~$v$ sends a tuple 
\[
(t,D[t], \nh[t], \ell, L[t])
\]
where $\nh[t]$ is the list of ``next hops'' to~$t$, i.e., the set of neighbors of~$v$ on a shortest path from $v$ to~$t$ (corresponding to $\out_{v,t}(v)$), $\ell$ is the overall flow passing through $v$ to reach $t$, and $L[t]$ is the load centrality of $t$, as far as $v$ knows. Note that this tuple can be significantly larger than a pair $(t,D[t])$ as $\nh[t]$ can potentially contain many entries. Instead, our algorithm for computing the betweenness centrality exchanges messages with a bounded number of entries. Note also that the algorithm in~\cite{MaccariGGMC18} converges in a number of phases related to
\[
\Diam(G)=\max_{s,t\in V}\maxhop(s,t), 
\]
where $\maxhop(s,t)$ is the maximum, taken over all shortest paths between $s$ and $t$, of the number of edges of each path. 
(Recall that the hop-diameter of $G$ is defined as $\diam(G)=\max_{s,t\in V}\minhop(s,t)$.)
Running for $\Diam(G)$ rounds seems unavoidable, by the definition of the load centrality, because a positive fraction of the flow from $s$ to $t$ is indeed shipped via a shortest path of length $\maxhop(s,t)$ between $s$ and~$t$. Nevertheless~\cite{MaccariGGMC18} the practical performance of the algorithm computing load centrality remains close to $\diam(G)$ --- this is mainly due to the fact that there are typically few shortest paths between any two nodes in real-world weighted networks, all with very similar numbers of edges.

In the next section, we show that betweenness centrality can also be computed distributively, by minimal modifications of Bellman-Ford.  In particular, our algorithm enables to compute exactly the optimal frequency $f(v)$ of each node~$v$, as described in Eq.~\eqref{eq:freq}. The performance of our algorithm is also shown to be close to the diameter of the network.

\section{Distributed Computation of BC}

This section describes our distributed algorithm for computing betweenness centrality. This algorithm is in essence a distributed implementation of the dynamic programming algorithm by Brandes~\cite{Brandes01}. Note that this latter algorithm is well suited for a distributed implementation because a node $v$ does not need to know the number of shortest paths from $s$ to $t$ in order to compute the contribution of the pair $s$ and $t$ to its betweenness centrality (see Lemma~\ref{lem:recursive-formula-bc}).

For the ease of presentation, we view the undirected graph $G=(V,E)$ as a directed graph where every edge $\{u,v\}$ is replaced by two symmetric arcs $(u,v)$ and $(v,u)$, both with the same weight as the edge  $\{u,v\}$. Also, we extend the definition of $\sigma_{s,t}(v)$ from nodes to arcs, by denoting, for every arc $(u,v)\in E$, $\sigma_{s,t}(u,v)$ as the number of shortest paths from $s$ to $t$  traversing the arc $(u,v)$, i.e., traversing the edge $\{u,v\}$ from $u$ to~$v$. The following two facts directly follow from the definitions. 

\medskip

\begin{fact}\label{fact:svvt}
If $\sigma_{s,t}(v)\neq 0$, i.e., if $v$ belongs to a shortest path from $s$ to $t$, then $\sigma_{s,t}(v)=\sigma_{s,v}\cdot\sigma_{v,t}$. Similarly, if the arc $(u,v)$ belongs to a shortest path from $s$ to $t$, then $\sigma_{s,t}(u,v)=\sigma_{s,u}\cdot\sigma_{v,t}$. 
\end{fact}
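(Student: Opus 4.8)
The plan is to prove Fact~\ref{fact:svvt} by a direct counting argument that exploits the concatenation structure of shortest paths. The key observation is that shortest paths satisfy an optimal-substructure property: if $v$ lies on some shortest path from $s$ to $t$, then $\dist(s,t) = \dist(s,v) + \dist(v,t)$, and moreover \emph{every} shortest $s$-to-$t$ path through $v$ decomposes uniquely as the concatenation of a shortest $s$-to-$v$ path and a shortest $v$-to-$t$ path. I would first establish this decomposition lemma, then set up a bijection between the set of shortest $s$-to-$t$ paths through $v$ and the Cartesian product of the set of shortest $s$-to-$v$ paths with the set of shortest $v$-to-$t$ paths; counting both sides yields $\sigma_{s,t}(v) = \sigma_{s,v} \cdot \sigma_{v,t}$.

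\medskip

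Concretely, first I would verify the additivity of distances. Suppose $v$ lies on a shortest $s$-to-$t$ path $P$. Splitting $P$ at $v$ gives an $s$-to-$v$ subpath $P_1$ and a $v$-to-$t$ subpath $P_2$ with $\text{length}(P) = \text{length}(P_1) + \text{length}(P_2)$. Each subpath must itself be shortest: if, say, $P_1$ could be shortened, we could splice in the shorter prefix to obtain a strictly shorter $s$-to-$t$ walk, contradicting that $P$ is shortest (here I would note that since weights are positive and $v$ is on $P$, the subpaths are genuine paths, and a shorter walk yields a shorter path). This gives $\dist(s,t) = \dist(s,v) + \dist(v,t)$ whenever $v$ lies on a shortest path.

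\medskip

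Next I would define the map sending each shortest $s$-to-$t$ path through $v$ to the ordered pair $(P_1, P_2)$ of its two halves. By the previous step this lands in the product of the two shortest-path sets. For the inverse direction, given any shortest $s$-to-$v$ path $Q_1$ and any shortest $v$-to-$t$ path $Q_2$, their concatenation $Q_1 \cdot Q_2$ has length $\dist(s,v) + \dist(v,t) = \dist(s,t)$, hence is a shortest $s$-to-$t$ path, and it passes through $v$. These two maps are mutually inverse, establishing the bijection and therefore the product formula $\sigma_{s,t}(v) = \sigma_{s,v}\cdot\sigma_{v,t}$. The arc version follows by the same argument applied to $v$, observing that a shortest $s$-to-$t$ path traverses the arc $(u,v)$ exactly when $u$ is the predecessor of $v$ on the path; one then decomposes at the arc and uses $\sigma_{s,t}(u,v) = \sigma_{s,u}\cdot\sigma_{v,t}$ (note that $\sigma_{v,t}$, not $\sigma_{u,t}$, appears because the $v$-to-$t$ suffix is unconstrained while the $s$-to-$u$ prefix is, since the arc is entered at $v$ from $u$).

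\medskip

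The step I expect to require the most care is ensuring the concatenation map is well defined as a \emph{bijection on paths} rather than merely on walks: I must argue that the concatenation of a shortest $s$-to-$v$ path with a shortest $v$-to-$t$ path never revisits a vertex (so the result is a simple path), and that distinct pairs of halves yield distinct full paths. Both follow from positivity of edge weights, which forces any repeated vertex to create a strictly shorter walk and hence contradict minimality; but this is the point where the argument would be fragile if weights could be zero or negative, so I would make the use of $w(e) > 0$ explicit there. The remainder is routine once the bijection is clean.
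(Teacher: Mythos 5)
Your argument is correct: the paper states Fact~\ref{fact:svvt} without proof, asserting that it ``directly follows from the definitions,'' and your concatenation--bijection argument (split each shortest $s$--$t$ path through $v$ at $v$, and conversely concatenate shortest $s$--$v$ and $v$--$t$ paths, using positivity of the weights to guarantee sub-path optimality and simplicity) is exactly the standard justification being implicitly invoked. Your handling of the arc case and your explicit attention to where $w(e)>0$ is needed are both sound, so this is the same (implicit) approach as the paper, fully worked out.
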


\medskip

Let $v\in V$. For every $t\in V$, let $\nh_v(t)$ be set of neighbors of $v$ that belong to some shortest paths from $v$ to $t$, and, for every $s\in V$, let $\ph_v(s)$ be the set of neighbors $u$ of $v$ such that $v$ belongs to some shortest paths from $s$ to $u$. ($\nh$ and $\ph$ holds for ``next hop'' and ``previous hop'', respectively). Note that, since $G$ is undirected, for every $u, v, w\in V$, $u\in \ph_v(w)$ if and only if $v\in \nh_u(w)$.

\medskip

\begin{fact} \label{fact:sigmaphs}
For every $t\neq v$, we have $\sigma_{v,t}=\sum_{u\in\nh_v(t)}\sigma_{u,t}$. If $v$ is on a shortest path from $s$ to $t$, then $\sigma_{v,t}=\sum_{u\in\ph_v(s)}\sigma_{v,t}(v,u)$. 
\end{fact}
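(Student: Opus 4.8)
The plan is to prove the two equalities separately, each by a first-hop decomposition of shortest paths.

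For the first equality, I would argue that every shortest path from $v$ to $t$ (with $t\neq v$) begins with a unique arc $(v,u)$, and that for such a path the neighbor $u$ necessarily lies in $\nh_v(t)$: if $(v,u)$ is the first arc of a shortest path from $v$ to $t$, then $\dist(v,t)=w(\{v,u\})+\dist(u,t)$, which is exactly the membership condition defining $\nh_v(t)$. Conversely, for each $u\in\nh_v(t)$, prepending the arc $(v,u)$ to any shortest path from $u$ to $t$ yields a shortest path from $v$ to $t$, and this is a bijection onto the shortest $v$-to-$t$ paths whose first arc is $(v,u)$. Hence the shortest paths from $v$ to $t$ partition according to their first hop $u\in\nh_v(t)$, the block indexed by $u$ having size $\sigma_{u,t}$, which gives $\sigma_{v,t}=\sum_{u\in\nh_v(t)}\sigma_{u,t}$.

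For the second equality, I would first record the trivial decomposition $\sigma_{v,t}=\sum_{u\in\nh_v(t)}\sigma_{v,t}(v,u)$, obtained by classifying shortest $v$-to-$t$ paths by their first arc and noting that $\sigma_{v,t}(v,u)=0$ whenever $u\notin\nh_v(t)$. It then suffices to show that, under the hypothesis that $v$ lies on a shortest path from $s$ to $t$, the index set may be enlarged to $\ph_v(s)$ without changing the sum; concretely, I would prove the inclusion $\nh_v(t)\subseteq\ph_v(s)$. This reduces to a distance identity: fix $u\in\nh_v(t)$, so $\dist(v,t)=w(\{v,u\})+\dist(u,t)$, and use that $v$ lies on a shortest path from $s$ to $t$, so $\dist(s,t)=\dist(s,v)+\dist(v,t)$. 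Combining these yields $\dist(s,v)+w(\{v,u\})=\dist(s,t)-\dist(u,t)$, while the triangle inequality $\dist(s,u)+\dist(u,t)\geq\dist(s,t)$ forces $\dist(s,u)\geq\dist(s,v)+w(\{v,u\})$; since the reverse inequality always holds, we get equality, which is precisely the condition $u\in\ph_v(s)$. Granting the inclusion, every $u\in\nh_v(t)$ already appears in $\ph_v(s)$, and each $u\in\ph_v(s)\setminus\nh_v(t)$ contributes a zero term, so $\sum_{u\in\ph_v(s)}\sigma_{v,t}(v,u)=\sum_{u\in\nh_v(t)}\sigma_{v,t}(v,u)=\sigma_{v,t}$.

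I expect the main obstacle to be the inclusion $\nh_v(t)\subseteq\ph_v(s)$: the rest is a routine repackaging of the definitions, but this is the only step that genuinely uses the hypothesis that $v$ lies on a shortest path from $s$ to $t$, and it relies on correctly chaining the two additive distance identities with the triangle inequality. One point to double-check is that the inclusion is oriented the useful way (we need every first hop toward $t$ to be captured by $\ph_v(s)$, not the reverse), and that the distance characterization of $\ph_v(s)$ I am using agrees with the symmetry relation $u\in\ph_v(w)\iff v\in\nh_u(w)$ recorded in the text.
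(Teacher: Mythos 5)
Your proof is correct. The paper offers no proof of this fact --- both facts are stated as following ``directly from the definitions'' --- and your argument (first-hop decomposition of shortest paths for the first identity, and for the second the inclusion $\nh_v(t)\subseteq\ph_v(s)$ obtained by chaining $\dist(s,t)=\dist(s,v)+\dist(v,t)$ with the triangle inequality, plus the vanishing of the terms indexed by $\ph_v(s)\setminus\nh_v(t)$) is exactly the justification the authors leave implicit. The point you flagged is fine: the distance characterizations you use, $u\in\nh_v(t)\iff\dist(v,t)=w(\{v,u\})+\dist(u,t)$ and $u\in\ph_v(s)\iff\dist(s,u)=\dist(s,v)+w(\{v,u\})$, are the readings consistent with the algorithm's update rules and with the stated symmetry $u\in\ph_v(w)\iff v\in\nh_u(w)$, and the inclusion is oriented the way the sum requires.
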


\medskip

Finally, for every $s\in V$, let $\bc_v(s)$ be the contribution of the source $s$ to $\bc_v$, that is, 
\[
\bc_v(s)=\sum_{t\neq v} \frac{\sigma_{s,t}(v)}{\sigma_{s,t}}. 
\]
By definition, we have 
\[
\bc_v=\frac{1}{(n-1)(n-2)}\sum_{s\neq v}\bc_v(s)
\]
for every node~$v$. The following result is central in our distributed implementation of Brandes algorithm. 

\medskip
\begin{lemma}[{\cite[Theorem 6]{Brandes01}}]\label{lem:recursive-formula-bc}
For every $s\neq v$, we have 
\[
\bc_v(s)=\sigma_{s,v} \sum_{u\in\ph_v(s)}\frac{\bc_u(s)+1}{\sigma_{s,u}}.
\]
\end{lemma}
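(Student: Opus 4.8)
The identity is exactly Brandes' dependency-accumulation recurrence, with $\bc_v(s)$ playing the role of the dependency $\delta_s(v)$, and the plan is to prove it by reorganizing the defining sum $\bc_v(s)=\sum_{t\neq v}\sigma_{s,t}(v)/\sigma_{s,t}$ according to the first arc a shortest path takes after leaving~$v$. First I would record the arc decomposition of the numerators: for every $t$, I claim $\sigma_{s,t}(v)=\sum_{u\in\ph_v(s)}\sigma_{s,t}(v,u)$. When $v$ lies on a shortest $s$--$t$ path this follows by combining Fact~\ref{fact:svvt} ($\sigma_{s,t}(v)=\sigma_{s,v}\cdot\sigma_{v,t}$) with the second part of Fact~\ref{fact:sigmaphs} ($\sigma_{v,t}=\sum_{u\in\ph_v(s)}\sigma_{v,t}(v,u)$) and the arc version of Fact~\ref{fact:svvt} (which gives $\sigma_{s,t}(v,u)=\sigma_{s,v}\cdot\sigma_{v,t}(v,u)$); when $v$ lies on no shortest $s$--$t$ path, both sides vanish. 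The geometric content is that any neighbour $u$ appearing immediately after $v$ on a shortest $s$--$t$ path necessarily has $v$ on a shortest $s$--$u$ path, i.e.\ $u\in\ph_v(s)$.

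Next I would substitute this decomposition into the definition and exchange the order of summation, obtaining
\[
\bc_v(s)=\sum_{u\in\ph_v(s)}\ \sum_{t\neq v}\frac{\sigma_{s,t}(v,u)}{\sigma_{s,t}} .
\]
The crucial algebraic step is the pointwise identity $\sigma_{s,t}(v,u)=\frac{\sigma_{s,v}}{\sigma_{s,u}}\,\sigma_{s,t}(u)$, valid for every $u\in\ph_v(s)$ and every~$t$ (with $\sigma_{s,u}>0$ since $G$ is connected). This is immediate from Fact~\ref{fact:svvt}: whenever the arc $(v,u)$, equivalently the node~$u$, lies on a shortest $s$--$t$ path, one has $\sigma_{s,t}(v,u)=\sigma_{s,v}\cdot\sigma_{u,t}$ and $\sigma_{s,t}(u)=\sigma_{s,u}\cdot\sigma_{u,t}$, so the ratio is $\sigma_{s,v}/\sigma_{s,u}$; otherwise both quantities are~$0$. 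Here I would also check that ``$(v,u)$ on a shortest $s$--$t$ path'' and ``$u$ on a shortest $s$--$t$ path'' are equivalent for $u\in\ph_v(s)$, which reduces to adding the equalities $\dist(s,u)=\dist(s,v)+w(\{v,u\})$ and $\dist(s,t)=\dist(s,u)+\dist(u,t)$. Factoring out $\sigma_{s,v}/\sigma_{s,u}$ then turns the inner sum into $\frac{\sigma_{s,v}}{\sigma_{s,u}}\sum_{t\neq v}\frac{\sigma_{s,t}(u)}{\sigma_{s,t}}$.

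It remains to recognize the inner sum as $\bc_u(s)+1$, and this boundary bookkeeping is the step I expect to be the most delicate. The sum runs over $t\neq v$, whereas $\bc_u(s)=\sum_{t\neq u}\sigma_{s,t}(u)/\sigma_{s,t}$ excludes $t=u$, so writing $f(t)=\sigma_{s,t}(u)/\sigma_{s,t}$ one has $\sum_{t\neq v}f(t)=\bc_u(s)+f(u)-f(v)$. The term $f(u)=\sigma_{s,u}(u)/\sigma_{s,u}$ equals~$1$ because every shortest $s$--$u$ path trivially passes through its own endpoint~$u$, so $\sigma_{s,u}(u)=\sigma_{s,u}$. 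The term $f(v)$ vanishes: since $u\in\ph_v(s)$ forces $\dist(s,u)=\dist(s,v)+w(\{v,u\})>\dist(s,v)$, the node $u$ is strictly farther from $s$ than $v$ and hence lies on no shortest $s$--$v$ path, giving $\sigma_{s,v}(u)=0$. Substituting $\sum_{t\neq v}f(t)=\bc_u(s)+1$ back and summing over $u\in\ph_v(s)$ yields $\bc_v(s)=\sigma_{s,v}\sum_{u\in\ph_v(s)}(\bc_u(s)+1)/\sigma_{s,u}$, as claimed.
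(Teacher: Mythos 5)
The paper does not actually prove this lemma---it is imported verbatim as Theorem~6 of Brandes~\cite{Brandes01}---so there is no in-paper argument to compare against; your proof is correct and is essentially Brandes' own dependency-accumulation argument (decompose $\sigma_{s,t}(v)$ by the first arc taken after $v$, rescale $\sigma_{s,t}(v,u)$ to $(\sigma_{s,v}/\sigma_{s,u})\,\sigma_{s,t}(u)$, and handle the boundary terms $t=u$ and $t=v$). Two small remarks: the arc decomposition $\sigma_{s,t}(v)=\sum_{u\in\ph_v(s)}\sigma_{s,t}(v,u)$, which you assert ``for every $t$,'' fails at $t=v$ (left side $\sigma_{s,v}>0$, right side $0$), but you only ever apply it inside the sum over $t\neq v$, so nothing breaks; and your argument repeatedly uses the characterization $u\in\ph_v(s)\iff\dist(s,u)=\dist(s,v)+w(\{v,u\})$, which is the edge-based (Brandes-style) reading of $\ph_v(s)$ and the one the algorithm actually computes---the paper's prose definition (``$v$ belongs to some shortest path from $s$ to $u$'') only yields $\dist(s,u)=\dist(s,v)+\dist(v,u)$ and would make the stated identity false whenever a heavy edge $\{v,u\}$ coexists with a strictly lighter multi-hop $v$--$u$ route, so you are right to work with the edge-based version.
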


\medskip

Our algorithm requires that each node $v\in V$ computes and maintains the sets $\nh_v(t)$ and $\ph_v(t)$ for every node $t\in V$.  This is achieved by applying simple modifications to the function \textsc{receive} in Bellman-Ford algorithm, as described in Algorithm~\ref{algoBC}, Lines~\ref{lineNHbegin}-\ref{lineNHend}. In this latter algorithm, and in the remaining of the text, we denote by $N(v)$ the set of neighbors of $v\in V$ in $G=(V,E)$, that is, 
\[
N(v)=\{u\in V: \{u,v\}\in E\},  
\]
and we denote by $N[v]=N(v)\cup\{v\}$  the closed neighborhood of node~$v$. 

\medskip

\begin{algorithm}[tb]
\caption{Computing betweenness centrality at node $v$}
\label{algoBC}
\begin{algorithmic}[1]

\Function{init}{} 
\ForAll {$t\in V$} 
	\State $D[t] \leftarrow +\infty$ \Comment{\emph{$D$ is a distance vector}}
	\State $\varnh[t] \leftarrow \emptyset$ \Comment{\emph{next-hop vector}}
	\State $\varph[t] \leftarrow \emptyset$ \Comment{\emph{previous-hop vector}}
	\ForAll {$u\in N[v]$} 
		\State $B[u,t] \leftarrow 0$ \Comment{\emph{eventually $\bc_u(t)$}}
		\State $S[u,t] \leftarrow 0$ \Comment{\emph{eventually $\sigma_{u,t}$}}
	\EndFor
\EndFor
\State $S[v,v] \leftarrow 1$ \Comment{\emph{$\sigma_{v,v}=1$}}
\State $D[v] \leftarrow 0$ \Comment{\emph{$\dist(v,v)=0$}}

\EndFunction

\medskip

\Function{send}{} 
\Loop \Comment{\emph{periodic updates are sent to neighbors}}
	\ForAll {$t\in V$} 
		\State send $(t,D[t],S[v,t],B[v,t])$ to all $u\in N(v)$
	\EndFor
\EndLoop	 
\EndFunction 
\medskip

\Function{receive}{message $(t,d,s,b)$ from $u\in N(v)$} 
\State $\varnh[t]\leftarrow \varnh[t] \smallsetminus \{u\}$ \Comment{\emph{initialization: $u$ is removed}}\label{lineNHbegin}
\State $\varph[t]\leftarrow \varph[t] \smallsetminus \{u\}$ \Comment{\;\;\;\;\;\;\emph{from both $\varnh$ and $\varph$}}
\If{$d+w(\{u,v\})<D[t]$}
	\State $D[t]\leftarrow d+w(\{u,v\})$ \Comment{\emph{Bellman-Ford update}}
\ElsIf{$d+w(\{u,v\}) = D[t]$} 
	\State $\varnh[t]\leftarrow \varnh[t] \cup \{u\}$ \Comment{\emph{$u$ is placed (back) in $\varnh$}}
\ElsIf{$d - w(\{u,v\}) = D[t]$}
	\State $\varph[t]\leftarrow \varph[t] \cup \{u\}$ \Comment{\emph{$u$ is placed (back) in $\varph$}}
\EndIf \label{lineNHend}
\State $S[u,t] \leftarrow s$ \label{lineBCbegin}
\State $B[u,t] \leftarrow b$ \label{line:testvv}
\State \textbf{if} $t\neq v$ \textbf{then}  $S[v,t]\leftarrow \sum_{x\in \varnh[t]}S[x,t]$ \label{line:computeS}
\State $B[v,t]\leftarrow S[v,t]\cdot \sum_{x\in \varph[t]}\frac{B[x,t]+1}{S[x,t]}$ \label{lineBCend}
\State $C \leftarrow \sum_{x \neq v} B[v,x]$ \Comment{\emph{eventually $C=\bc_v$}} \label{lineBC}
\EndFunction 
\end{algorithmic}
\end{algorithm}

\begin{lemma} \label{lem:NHetPH}
Algorithm~\ref{algoBC} enables every node~$v\in V$ to compute  the sets $\nh_v(t)$ and $\ph_v(t)$ for every node $t\in V$.  More specifically, at node~$v$, for every node~$t\in V$, $\varnh[t]=\nh_v(t)$ after $\maxhop(v,t)+1$ phases, and $\varph[t]=\ph_v(t)$ after $\Diam(G)+2$ phases.
\end{lemma}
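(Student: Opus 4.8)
### Proof Plan

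The plan is to prove the two claims by induction on the number of phases, tracking how the local state of the \textsc{receive} function (Lines~\ref{lineNHbegin}--\ref{lineNHend}) evolves. The crucial preliminary observation is that the distance estimates $D[t]$ are governed by the underlying Bellman-Ford dynamics, which are unaffected by the new bookkeeping lines: as recalled in the excerpt, after $h$ phases every node $v$ with $\minhop(v,t)\leq h$ has $D[t]=\dist(v,t)$. However, for the next-hop and previous-hop sets we need a stronger guarantee, namely that $v$ has the \emph{correct final} distance and that its relevant neighbors do too. So first I would isolate the fact that $D[t]=\dist(v,t)$ holds at $v$ no later than phase $\minhop(v,t)\leq\maxhop(v,t)$, and that each neighbor $u$ has $D_u[t]=\dist(u,t)$ no later than phase $\maxhop(v,t)$ as well (since $\minhop(u,t)\leq\minhop(v,t)+1\leq\maxhop(v,t)$ for a neighbor $u$ on the relevant side).

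For the $\varnh$ claim, the target is to show $\varnh[t]=\nh_v(t)$ after $\maxhop(v,t)+1$ phases. The membership test is the condition $d+w(\{u,v\})=D[t]$ on the incoming message: a neighbor $u$ is (re)inserted into $\varnh[t]$ exactly when the distance it reports, plus the edge weight, equals $v$'s current distance estimate. By definition, $u\in\nh_v(t)$ iff $\dist(u,t)+w(\{u,v\})=\dist(v,t)$, so I would argue that once both $D_u[t]$ and $D[t]$ have stabilized to their true values, the test fires precisely for the neighbors in $\nh_v(t)$. The two-sided initialization on Lines~\ref{lineNHbegin}--\ref{lineNHend} (removing $u$ before re-testing) is what guarantees that a spuriously-inserted neighbor from an earlier, pre-convergence phase is eventually evicted: every incoming message from $u$ first deletes $u$ from $\varnh[t]$ and then reinstates it only if the current test passes. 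Hence I need the true distances to be in place \emph{and} one further message from each relevant neighbor to perform the clean test, which accounts for the ``$+1$''. The care here is that neighbors $u$ with $\dist(u,t)=\dist(v,t)+w(\{u,v\})$ (the $\varph$ side) or strictly larger distances are never placed in $\varnh[t]$ once distances are correct, so I would check each of the three branches of the if/elseif chain lands each neighbor in at most one of $\varnh[t]$, $\varph[t]$, or neither.

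For the $\varph$ claim, the structure is identical but the membership condition is $d-w(\{u,v\})=D[t]$, matching the definition $u\in\ph_v(s)$ iff $v$ lies on a shortest $s$-$u$ path, i.e. $\dist(s,v)+w(\{v,u\})=\dist(s,u)$, equivalently the reported $d=\dist(u,s)$ satisfies $d-w=\dist(v,s)=D[s]$. The difference lies in the timing: here the reported value $d$ must be the \emph{final} distance $\dist(u,t)$ from the neighbor, and $u$'s convergence for target $t$ can take as long as $\minhop(u,t)$ phases, which over all pairs is bounded by $\diam(G)$; combined with the need for $v$'s own estimate to be final and one extra clean-test message, the worst case reaches $\Diam(G)+2$. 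The main obstacle I anticipate is precisely this timing bookkeeping for $\varph$: a neighbor $u$ can be momentarily mis-classified as a previous hop while its distance is still decreasing, and I must argue that every such transient membership is erased by a later message carrying the corrected (final) distance, and that the last such correction arrives within the stated bound. I would handle this by a global synchronization argument: once \emph{all} nodes have converged on \emph{all} targets (which happens by phase $\Diam(G)$ at the latest, since $\maxhop\le\Diam$), one further round of messages suffices to clean every $\varph[t]$ set through the mandatory deletion-then-retest, plus one round of slack for message propagation, giving the $+2$.
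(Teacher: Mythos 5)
Your overall strategy matches the paper's in its essentials: both arguments rest on (i) the fact that the Bellman-Ford estimate $D[t]$ at $v$ stabilizes to $\dist(v,t)$ within $\minhop(v,t)\le\maxhop(v,t)$ phases, (ii) the observation that the tests $d+w(\{u,v\})=D[t]$ and $d-w(\{u,v\})=D[t]$ characterize membership in $\nh_v(t)$ and $\ph_v(t)$ exactly once both endpoints report true distances, and (iii) the delete-then-retest mechanism to flush transient misclassifications. The paper organizes both halves as an induction on $h=\maxhop(v,t)$, which yields the per-pair bound $\maxhop(v,t)+2$ for $\varph[t]$, of which $\Diam(G)+2$ is the worst case; your global argument for $\varph$ --- wait until all distances have stabilized, which happens by phase $\diam(G)\le\Diam(G)$, then let one further round of messages re-sort every neighbor --- is simpler and still delivers the stated bound, at the cost of losing the per-pair refinement.

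One step in your plan is wrong as written. To show that a neighbor $u\in\nh_v(t)$ reports $d=\dist(u,t)$ in phase $\maxhop(v,t)+1$, you invoke $\minhop(u,t)\le\minhop(v,t)+1\le\maxhop(v,t)$. Neither inequality holds in general: if $v$ has a heavy direct edge to $t$ that is a shortest path while $u\in\nh_v(t)$ reaches $t$ only through a long path of light edges, then $\minhop(u,t)$ exceeds $\minhop(v,t)+1$ by an arbitrary amount; and $\minhop(v,t)+1\le\maxhop(v,t)$ fails whenever the shortest path between $v$ and $t$ is unique. The correct chain, which is the one the paper's induction uses, is $\minhop(u,t)\le\maxhop(u,t)\le\maxhop(v,t)-1$ for $u\in\nh_v(t)$, because every shortest $u$-$t$ path extends to a shortest $v$-$t$ path with one more edge. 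With that substitution your $\varnh$ argument goes through. It would also be worth stating explicitly, rather than only gesturing at, why no $u\notin\nh_v(t)$ can survive in $\varnh[t]$ after phase $\maxhop(v,t)+1$: Bellman-Ford estimates never undershoot the true distance, so for such $u$ the reported value satisfies $d+w(\{u,v\})\ge\dist(u,t)+w(\{u,v\})>\dist(v,t)=D[t]$, and the test cannot fire once $D[t]$ is final; combined with the mandatory deletion at the top of \textsc{receive}, this evicts every spurious entry.
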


\begin{proof} 
Let $v\in V$. The proof is by induction on $h=\maxhop(v,t)$. For $h=0$, the fact that $D[v]$ remains zero throughout the execution of the algorithm guarantees that $\varnh[v]$ remains empty throughout the execution, as desired. Let $h>0$, and let us assume that, for every $t\in V$ with $\maxhop(v,t)<h$, $\varnh[t]=\nh_v(t)$ after $h$ phases. Let $t\in V$ with $\maxhop(v,t)=h$. Since $\minhop(x,y)\leq \maxhop(x,y)$ for every two nodes $x,y\in V$, we have $D[t]=\dist(v,t)$ after $h$ phases. Let $u\in \nh_v(t)$, and let us consider the updates occurring at phase $h+1$. Since $\maxhop(u,t)<\maxhop(v,t)$, the value $d$ in the tuple $(t,d,s,b)$ sent by $u$ to $v$ at phase $h+1$ satisfies $d=\dist(u,t)$. Therefore, the equality $d+w(\{u,v\}) = D[t]$ holds at~$v$, resulting to $u$ being added to $\varnh[t]$, as desired. That is, for every $u\in \nh_v(t)$, Algorithm~\ref{algoBC} guarantees that $u\in\varnh[t]$ at $v$, after $h+1$ phases. This completes the proof of the induction for the next hops vector. 

The proof is similar for the previous hops vector. For the base case $h=0$, observe that 
\[
\ph_v(v)=\{u\in N(v): w(\{u,v\})=\dist(u,v)\}.
\] 
It follows that, for every $u\in \ph_v(v)$, the value $d$ in the tuple $(t,d,s,b)$ sent by $u$ to $v$ at phase~$2$ satisfies $d=\dist(u,v)$. Therefore, the equality $d-w(\{u,v\}) = 0 = D[v]$ holds at~$v$, resulting in $u$ being added to $\varph[t]$, as desired. Now, let $h>0$, and let us assume that, for every $t\in V$ with $\maxhop(v,t)<h$, $\varph[t]=\ph_v(t)$ after $h+1$ phases. Let $t\in V$ with $\maxhop(v,t)=h$, and let $u\in \ph_v(t)$. Since $u\in \ph_v(t)$, $\maxhop(u,t)\leq h+1$, for which it follows that the distance to~$t$ is correctly set at~$u$ after $h+1$ phases. Therefore, at phase $h+2$, the value $d$ in the tuple $(t,d,s,b)$ sent by $u$ to $v$ satisfies $d=\dist(u,t)$, and thus the equality $d-w(\{u,v\}) = D[t]$ holds at~$v$, resulting in $u$ being added to $\varph[t]$. This completes the proof of the induction  for the previous hops vector.  

The proof completes by noticing that, as $D[t]=\dist(v,t)$ remains stable after $\minhop(v,t)$ phases, a node~$u$ added to $\varnh[t]$ or to $\varph[t]$ after the due number of phases stays in this set for the remaining phases of the algorithm.  
\end{proof}

\medskip
We now claim that, in Algorithm~\ref{algoBC}, every node $v\in V$ computes $\sigma_{v,t}$ and $\bc_v(t)$ for every node $t\in V$, at Lines~\ref{lineBCbegin}-\ref{lineBCend}. We treat $\sigma_{v,t}$ and $\bc_v(t)$ separately, as the former is somehow computed top-down, while the latter is computed bottom-up. 

\medskip

\begin{lemma}\label{lem:sigma-equal-S}
Algorithm~\ref{algoBC} enables every node~$v\in V$ to compute $\sigma_{v,t}$ for every node $t\in V$.  More specifically, at node~$v$, for every node~$t\in V$, we have $S[v,t]=\sigma_{v,t}$ after $\Diam(G)+1$ phases. 
\end{lemma}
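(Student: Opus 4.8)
The plan is to prove this by induction on $h = \maxhop(v,t)$, mirroring the structure of the preceding lemma on $\varnh$ and $\varph$, since the recurrence for $S[v,t]$ in Line~\ref{line:computeS} relies on $\varnh[t]$ being correct. The key structural fact I would use is Fact~\ref{fact:sigmaphs}, namely $\sigma_{v,t} = \sum_{u\in\nh_v(t)}\sigma_{u,t}$ for $t\neq v$, which matches exactly the update $S[v,t]\leftarrow \sum_{x\in\varnh[t]}S[x,t]$. The base case is $h=0$, i.e. $t=v$: here \textsc{init} sets $S[v,v]\leftarrow 1 = \sigma_{v,v}$, and since $D[v]$ stays $0$ and no message with target $v$ ever decreases it, the value $S[v,v]$ is never overwritten (Line~\ref{line:computeS} is guarded by $t\neq v$), so it remains correct.

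For the inductive step, I would assume that for every $t'$ with $\maxhop(v,t') < h$ the value $\sigma_{v,t'}$ is correctly computed and stable after the appropriate number of phases, and consider a target $t$ with $\maxhop(v,t)=h$. The crucial observation is that every $u\in\nh_v(t)$ satisfies $\maxhop(u,t) < \maxhop(v,t) = h$; hence by induction each such neighbor has $S[u,t]=\sigma_{u,t}$ stable before $v$'s relevant update. I would then invoke Lemma~\ref{lem:NHetPH} to guarantee that $\varnh[t]=\nh_v(t)$ at $v$ after $\maxhop(v,t)+1 = h+1$ phases. Once both ingredients hold, at the next phase $v$ receives from each neighbor $u$ a tuple carrying the correct $s=\sigma_{u,t}$ (stored into $S[u,t]$ at Line~\ref{lineBCbegin}), and the sum in Line~\ref{line:computeS} ranges over precisely $\nh_v(t)$, so Fact~\ref{fact:sigmaphs} yields $S[v,t]=\sigma_{v,t}$.

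The main obstacle is bookkeeping the phase counts so that both $\varnh[t]$ and all the $S[u,t]$ for $u\in\nh_v(t)$ are simultaneously correct at the moment $v$ performs its update, and then arguing stability. The subtlety is that $S[v,t]$ is recomputed on \emph{every} received message, not only the ``correct'' one, so I must ensure that once the distances have stabilized (after $\minhop(v,t)$ phases), $\varnh[t]$ no longer loses or gains spurious members and the incoming $S[u,t]$ values are final; any earlier recomputation with stale data is harmless because it gets overwritten. To convert the per-pair bound $\maxhop(v,t)+1$ into the uniform bound $\Diam(G)+1$ claimed in the statement, I would finally take the maximum over all $t$: since $\maxhop(v,t)\leq\Diam(G)$ for every $t$, correctness of $S[v,t]$ for all $t$ simultaneously is guaranteed after $\Diam(G)+1$ phases, and the stability remark from the end of the previous proof ensures the values persist thereafter.
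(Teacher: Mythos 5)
Your proposal is correct and follows essentially the same route as the paper's proof: induction on $h=\maxhop(v,t)$, the base case $t=v$ handled by the \textsc{init} assignment and the guard at Line~\ref{line:computeS}, and the inductive step combining Lemma~\ref{lem:NHetPH} (for $\varnh[t]=\nh_v(t)$), the observation that $\maxhop(u,t)<\maxhop(v,t)$ for $u\in\nh_v(t)$, and Fact~\ref{fact:sigmaphs}. Your extra remarks on stability and on harmless recomputations with stale data are sound and merely make explicit what the paper defers to the closing paragraph of the proof of Lemma~\ref{lem:NHetPH}.
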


\begin{proof} 
Let $v\in V$. The proof is by induction on $h=\maxhop(v,t)$. The statement holds for $h=0$, i.e., for $t=v$, as $S[v,v]=1$ in the function \textsc{init}, and $S[v,v]$ is not modified by the function \textsc{receive} (cf. the test performed at Line~\ref{line:computeS}). Now, let $h>0$, and assume that for every $t$ such that $\maxhop(v,t)<h$, $S[v,t]=\sigma_{v,t}$ after $h$ phases. Consider phase $h+1$. By Lemma~\ref{lem:NHetPH}, we have $\varnh[t]=\nh_v(t)$ at Line~\ref{line:computeS}. Moreover, by induction, since $\maxhop(x,t)<\maxhop(v,t)$ for every $x\in \nh_v(t)$, we have $S[x,t]=\sigma_{x,t}$ for every such node~$x$ after at most $h$ phases. During phase $h+1$, node~$v$ receives all the values $s=S[x,t]$ from these nodes, and thus, once it has received all of them, the update of Line~\ref{line:computeS} yields $S[v,t]=\sigma_{v,t}$, by Fact~\ref{fact:sigmaphs}.
\end{proof}

\begin{lemma}\label{lem:BCin2Dtime}
Algorithm~\ref{algoBC} enables every node~$v\in V$ to compute $\bc_v(s)$ for every node $s\in V\setminus\{v\}$.  More specifically, at node~$v$, for every node~$s\in V\setminus\{v\}$, $B[v,s]=\bc_v(s)$ after $2  \Diam(G)+1$ phases. 
\end{lemma}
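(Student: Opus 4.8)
The plan is to read Line~\ref{lineBCend} of Algorithm~\ref{algoBC} as a verbatim distributed implementation of the recurrence in Lemma~\ref{lem:recursive-formula-bc}, and then prove convergence by an induction that is evaluated \emph{bottom-up}: from the sinks of the shortest-path DAG rooted at~$s$ (the nodes with $\ph_v(s)=\emptyset$) back towards~$s$ itself, which is the reverse of the direction along which the path counts $S$ converge. First I would fix a \emph{stabilization phase} $P_0=\Diam(G)+2$ after which every ingredient feeding Line~\ref{lineBCend} is already correct and frozen. By Lemma~\ref{lem:NHetPH} we have $\varph[s]=\ph_v(s)$ at every node~$v$ and every source~$s$; by Lemma~\ref{lem:sigma-equal-S} we have $S[v,s]=\sigma_{v,s}$, and one extra phase of message passing guarantees that the copies $S[x,s]$ stored at~$v$ for its neighbors~$x$ equal $\sigma_{x,s}$ as well (this is why $\Diam(G)+2$ rather than $\Diam(G)+1$). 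Using $\sigma_{v,s}=\sigma_{s,v}$ and $\varph[s]=\ph_v(s)$, from phase $P_0$ onwards the update of Line~\ref{lineBCend} reads
\[
B[v,s]\;\leftarrow\;\sigma_{v,s}\sum_{u\in\ph_v(s)}\frac{B[u,s]+1}{\sigma_{u,s}},
\]
which is exactly the right-hand side of Lemma~\ref{lem:recursive-formula-bc}, with the stored neighbor copies $B[u,s]$ standing in for $\bc_u(s)$.

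To control the propagation of the $B$ values I would introduce, for a fixed source~$s$, the \emph{backward depth} $\delta_s(v)$ of a node~$v$ lying on some shortest path out of~$s$, defined as the maximum number of edges of a shortest-path suffix emanating from~$v$, that is, $\delta_s(v)=\max\{\,\text{number of hops from } v \text{ to } t : v \text{ lies on a shortest } s\text{--}t \text{ path}\,\}$. Thus $\delta_s(v)=0$ precisely when $\ph_v(s)=\emptyset$, and $\delta_s(v)=1+\max_{u\in\ph_v(s)}\delta_s(u)$ otherwise. I would then prove, by induction on $\delta_s(v)$, that $B[v,s]=\bc_v(s)$ from phase $P_0+\delta_s(v)$ onwards. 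For the base case $\delta_s(v)=0$, the set $\varph[s]$ is empty from phase $P_0$ on, so Line~\ref{lineBCend} evaluates an empty sum and sets $B[v,s]=0$; this matches $\bc_v(s)=0$, since a node with no previous hop out of~$s$ lies on no shortest $s$--$t$ path with $t\neq v$ (otherwise the successor of~$v$ on such a path would belong to $\ph_v(s)$). For the inductive step, every $u\in\ph_v(s)$ is a \emph{neighbor} of~$v$ with $\delta_s(u)\le\delta_s(v)-1$, hence by the induction hypothesis satisfies $B[u,s]=\bc_u(s)$ from phase $P_0+\delta_s(v)-1$ on; at the next phase each such~$u$ transmits this value to~$v$, which stores it and re-evaluates Line~\ref{lineBCend}, obtaining $B[v,s]=\bc_v(s)$ by Lemma~\ref{lem:recursive-formula-bc}. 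Since the inputs $\varph[s]$, the $S$-values, and the converged neighbor values $B[u,s]$ are all stable, a short persistence argument shows that once correct, $B[v,s]$ remains correct in every later phase.

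It then remains to bound the induction depth. Because $s\neq v$, any shortest path from~$s$ to a witness~$t$ through~$v$ splits into a prefix from~$s$ to~$v$ of at least one edge and a suffix from~$v$ to~$t$; as the whole path is shortest it has at most $\maxhop(s,t)\le\Diam(G)$ edges, so its suffix has at most $\Diam(G)-1$ edges. Hence $\delta_s(v)\le\Diam(G)-1$ for every~$v$ and every $s\neq v$, and the induction yields $B[v,s]=\bc_v(s)$ after $P_0+(\Diam(G)-1)=2\,\Diam(G)+1$ phases, as claimed.

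I expect the delicate point to be the \emph{choice and direction} of the induction parameter. The recurrence is evaluated in the direction opposite to the inward convergence of~$S$ and $\varph$, so one must carefully argue that possibly-stale $B$ values computed before $P_0$ are flushed layer by layer, starting from the sinks and moving up towards~$s$, rather than contaminating the final answer. Making the additive constants line up to exactly $2\,\Diam(G)+1$ also hinges on the $s\neq v$ hypothesis, which is precisely what saves the final hop in the bound $\delta_s(v)\le\Diam(G)-1$.
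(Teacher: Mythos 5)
Your proof is correct and follows essentially the same route as the paper: stabilize $\varph$ and the $S$-values by phase $\Diam(G)+2$ via Lemmas~\ref{lem:NHetPH} and~\ref{lem:sigma-equal-S}, then propagate the $B$-values backward through the shortest-path DAG rooted at $s$ using Lemma~\ref{lem:recursive-formula-bc}, one layer per phase. The only cosmetic difference is the induction parameter --- your backward depth $\delta_s(v)$ versus the paper's $\Diam(G)-\maxhop(v,s)$, which upper-bounds it --- and both yield the same $2\Diam(G)+1$ bound via the $s\neq v$ hypothesis.
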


\begin{proof} 
Let $v\in V$. The proof is by induction on $h=\Diam(G)-\maxhop(v,s)$, that is, we show that, for every~$s$ such that $\Diam(G)-\maxhop(v,s)\leq h$, $B[v,s]=\bc_v(s)$ after $\Diam(G)+h+2$ phases. 

For the base case $h=0$, let $s\in V$ such that $\maxhop(v,s)=\Diam(G)$. (If there are no such $s$, then the base case  holds trivially for~$v$). By Lemma~\ref{lem:NHetPH}, after $\Diam(G)+2$ phases, we have $\varph[s]=\ph_v(s)$ at Line~\ref{lineBCend}. Therefore, $\varph[s]=\emptyset$, because $\ph_v(s)=\emptyset$  as $v$ is a node at maximum hop-distance from~$s$. As a consequence, $B[v,s]$ is correctly set to $0=\bc_v(s)$ at Line~\ref{lineBCend}. 

Now, let $h\geq 0$, and let us assume that, for every $s$ such that $\Diam(G)-\maxhop(v,s)\leq h$, $B[v,s]=\bc_v(s)$ after $\Diam(G)+h+2$ phases. Let $s$ such that $\Diam(G)-\maxhop(v,s) = h+1$. (Again, if there are no such $s$, then the induction step $h\to h+1$ holds trivially for~$v$). By Lemma~\ref{lem:NHetPH}, after $\Diam(G)+2$ phases, we have $\varph[s]=\ph_v(s)$ at Line~\ref{lineBCend}. Moreover, by Lemma~\ref{lem:sigma-equal-S}, we have $S[v,s]=\sigma_{v,s}$ and, for every $x\in \ph_v(s)=\varph[s]$, the equality $S[x,s]=\sigma_{x,s}$ holds. Furthermore, for every $x\in\ph_v(s)$, we have $\maxhop(x,s)>\maxhop(v,s)$, from which it follows by induction that $B[x,s]=\bc_x(s)$ after $\Diam(G)+h+2$ phases. It then follows from Lemma~\ref{lem:recursive-formula-bc} that the computation performed at Line~\ref{lineBCend} guarantees that $B[v,s]=\bc_v(s)$, which completes the induction step.
As we only need to compute $\bc_v(s)$ for $s\neq v$, we need only to consider $h<\Diam(G)$, and the lemma follows.
\end{proof}

Since $\bc_v$ does not depend on $\bc_v(v)$ but only on $\bc_v(s)$ for $s\in V\setminus\{v\}$, Lemma~\ref{lem:BCin2Dtime} immediately implies that Algorithm~\ref{algoBC} enables every node~$v\in V$ to compute $\bc_v$ at Line~\ref{lineBC}, after $2\Diam(G)+1$ phases, up to renormalization by $1/((n-1)(n-2))$. The following theorem summarizes the results in this section. 
\begin{theorem}
Algorithm~\ref{algoBC} enables every node to compute its betweenness  centrality in any network $G$ after $2  \Diam(G)+1$ phases. 
\end{theorem}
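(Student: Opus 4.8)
The plan is to obtain the theorem as an immediate corollary of Lemma~\ref{lem:BCin2Dtime}, combined with the additive decomposition of $\bc_v$ into source contributions. First I would recall the identity $\bc_v=\frac{1}{(n-1)(n-2)}\sum_{s\neq v}\bc_v(s)$, established right after the definition of $\bc_v(s)$. This identity is the crux of the argument: it shows that $\bc_v$ is a function only of the contributions $\bc_v(s)$ for sources $s\neq v$, and in particular that $\bc_v(v)$ is irrelevant to the final value, so the algorithm need never compute it correctly.

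Next I would invoke Lemma~\ref{lem:BCin2Dtime}, which asserts that after $2\Diam(G)+1$ phases every node $v$ satisfies $B[v,s]=\bc_v(s)$ for all $s\in V\setminus\{v\}$. It then remains only to verify that the scalar $C$ accumulated at Line~\ref{lineBC} correctly aggregates these quantities. Since Line~\ref{lineBC} sets $C=\sum_{x\neq v}B[v,x]$, the summation index ranges exactly over $x\neq v$ and thus automatically discards the entry $B[v,v]$; once all the $B[v,s]$ with $s\neq v$ have stabilized, the next execution of Line~\ref{lineBC} records $C=\sum_{s\neq v}\bc_v(s)=(n-1)(n-2)\,\bc_v$. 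Dividing by the normalization $(n-1)(n-2)$ therefore yields $\bc_v$.

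Because this statement is essentially a restatement of what the lemmas already establish, I do not anticipate a genuine obstacle, and the proof should be only a few lines. The only points requiring care are bookkeeping ones. I would check that $C$ is re-evaluated after the relevant $B[v,s]$ entries have converged; this is guaranteed because Line~\ref{lineBC} lies in the same \textsc{receive} body as the update at Line~\ref{lineBCend}, so $C$ is refreshed upon every received message, and in particular on the message that drives the last $B[v,s]$ to its correct value. I would also make explicit, as the surrounding text already signals, that the algorithm produces $C$ only up to the constant factor $(n-1)(n-2)$, so that $\bc_v$ is recovered by this single final renormalization. A subsidiary remark worth including is that the phase count $2\Diam(G)+1$ is exactly the bound furnished by Lemma~\ref{lem:BCin2Dtime} and cannot be sharpened by this argument, since the limiting factor is the bottom-up propagation of the $\bc$-contributions analyzed there rather than the trivial final summation.
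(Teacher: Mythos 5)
Your proposal is correct and follows exactly the paper's own route: the paper likewise derives the theorem as an immediate consequence of Lemma~\ref{lem:BCin2Dtime} together with the decomposition $\bc_v=\frac{1}{(n-1)(n-2)}\sum_{s\neq v}\bc_v(s)$, observing that Line~\ref{lineBC} sums only over $x\neq v$ and that the result is obtained up to the renormalization by $1/((n-1)(n-2))$. Your additional bookkeeping remarks (that $C$ is refreshed on every received message, and that the phase bound is inherited from the lemma) are consistent with, and slightly more explicit than, the paper's one-sentence justification.
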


\begin{figure}
 \centering
 \includegraphics[scale=0.2]{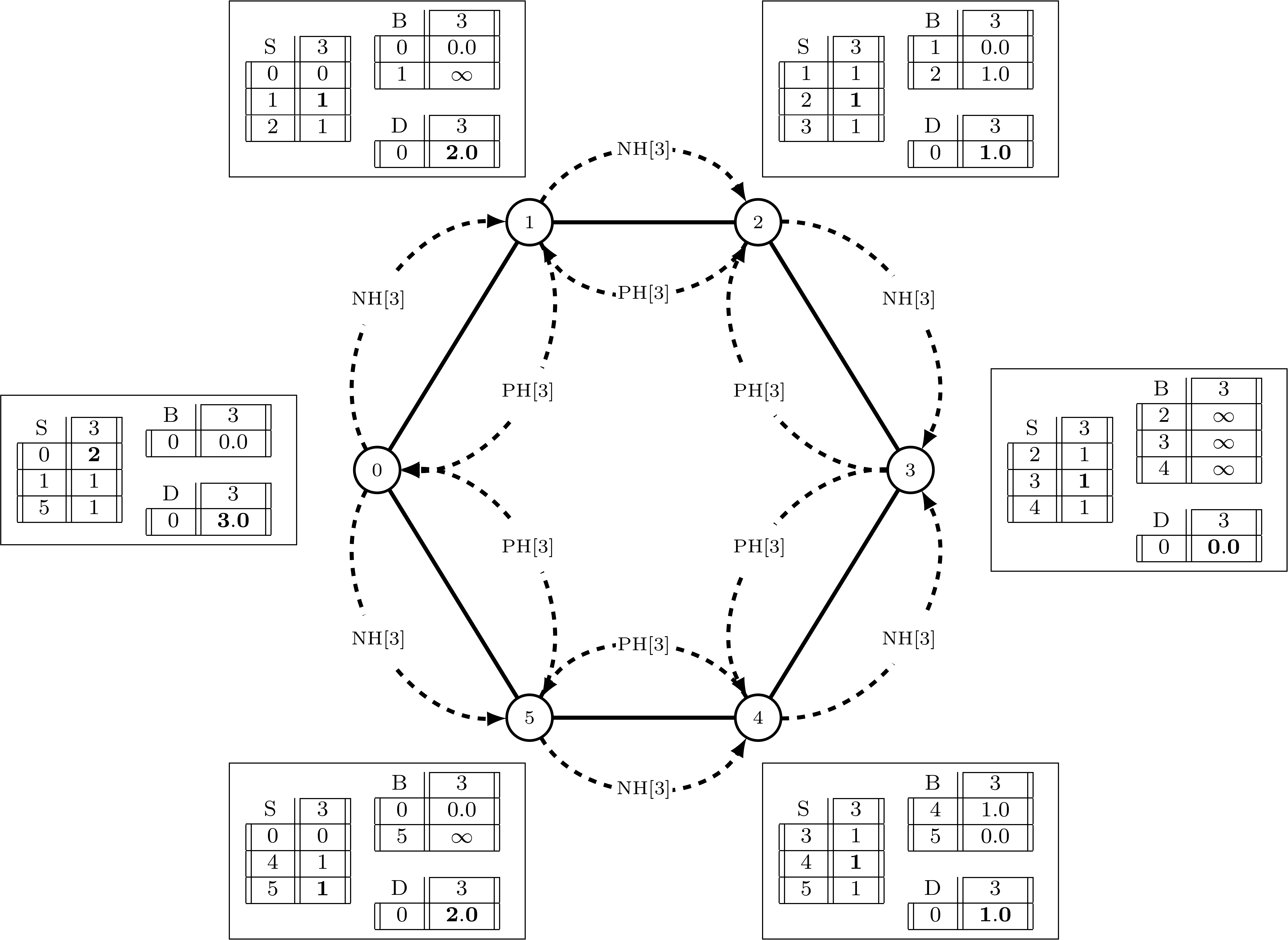}
 \caption{The state of Algorithm~\ref{algoBC} executed on a cycle of $6$ nodes, at the end of the fourth send/receive phase of the algorithm.}
 \label{fig:execution04}
\end{figure}

\subsection{An example of execution of Algorithm~\ref{algoBC}}

We now describe few steps of the execution of Algorithm~\ref{algoBC} on an unweighted cycle of six nodes, whose diameter is thus $\Diam=3$. In particular, we show how the algorithm computes the contribution of node $t=3$ to the betweenness centrality of all nodes, after the first $4$ phases have been completed (that is, after the correct value of the sets $\varnh$ and $\varph$ and of the matrix $\mathrm{S}$ has been computed). In Figure~\ref{fig:execution04} we show the state of all nodes at this moment of the algorithm execution. For example, node~$5$ now knows that its set of next hops towards node~$3$ contains only node~$4$, while its set of previous hops towards node~$3$ contains only node~$0$ (which is the only neighbor of node~$5$ which admits a shortest path towards node~$3$ passing through node~$5$). Note that all nodes have also correctly computed their distance from node~$3$ (in the case of node~$5$, for example, this distance is $2.0$), and the number of shortest paths connecting them to node~$3$ (in the case of node~$5$, for example, this number is $1$). In the figure, all these correct values are shown in boldface.

\begin{figure*}
 \centering
 \includegraphics[scale=0.2]{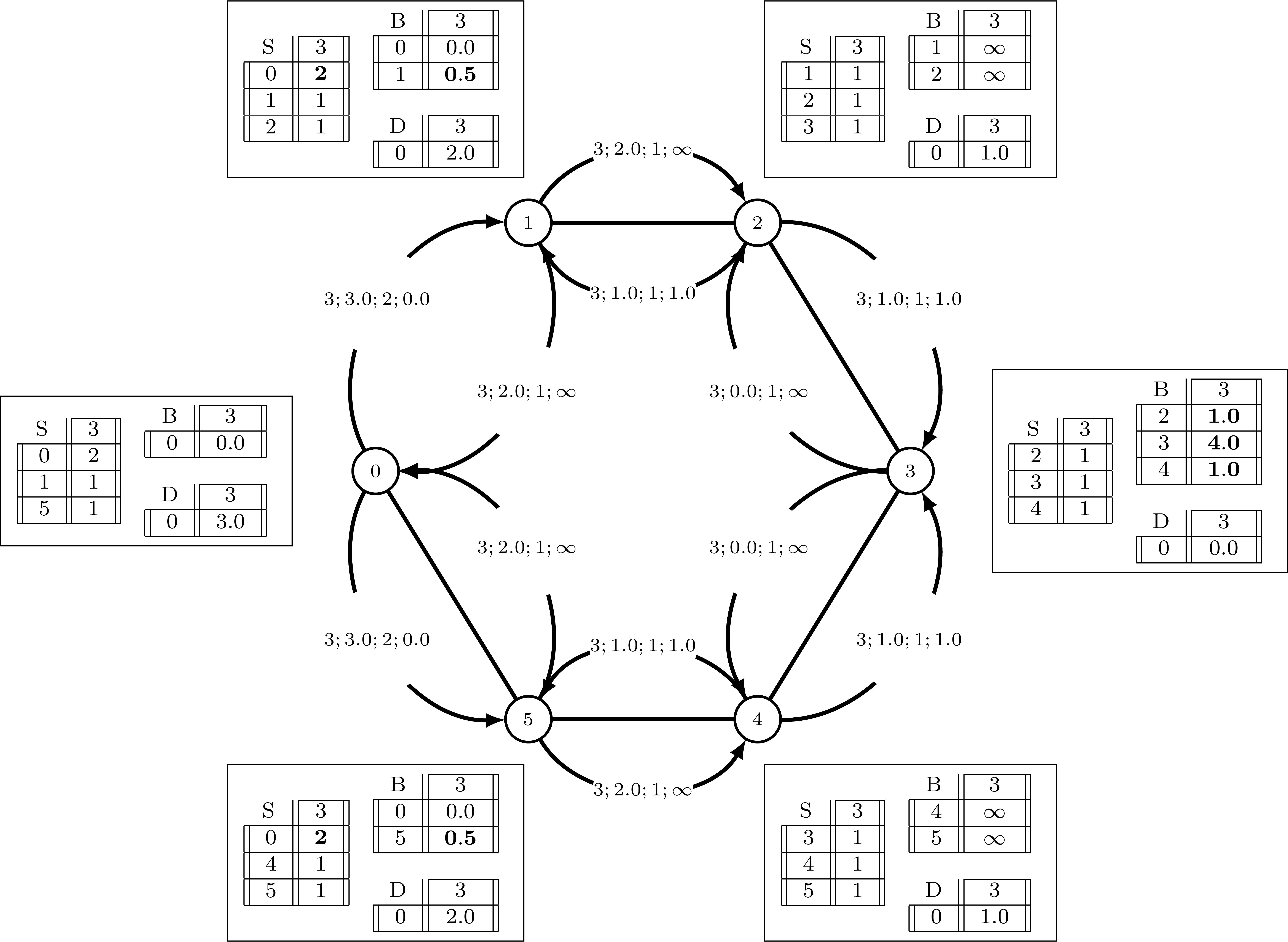}
 \caption{The state of Algorithm~\ref{algoBC} executed on a cycle of $6$ nodes, at the end of the fifth send/receive phase of the algorithm.}
 \label{fig:execution05}
\end{figure*}

In Figure~\ref{fig:execution05}, we show the messages sent by each node to its neighbors concerning node $t=3$ in the fifth send/receive phase (these are the messages sent by the function \textsc{send} defined in lines~11--14 of Algorithm~\ref{algoBC}). For instance, node~$5$ sends to both node~$0$ and node~$4$ the message $(3,2.0,1,\infty)$, telling them that its distance from node~$3$ is $2.0$, that there is only one shortest path from it to node~$3$, and that the current contribution of node~$3$ to its betweenness centrality is still unknown. On the other hand, suppose that node~$5$ receives first the message $(3,3.0,2,0.0)$ from node~$0$ and then the message $(3,1.0,1,1.0)$ from node~$4$. Even if node~$0$ is taken out of $\varph[3]$ by node~$5$ at line~16 of Algorithm~\ref{algoBC}, this node is reinserted in $\varph[3]$ by node~$5$ at line~23 of Algorithm~\ref{algoBC}, since the distance of node~$5$ from node~$3$ (that is, the value $\mathrm{D}[3]$) is equal to $d=3.0$ minus $1$ (remember that the graph is unweighted). Hence, the set $\varph[3]$ of node~$5$ does not change. After receiving the message $(3,3.0,2,0.0)$ from node~$0$, node~$5$ updates (at lines~24--25 of Algorithm~\ref{algoBC}) $\mathrm{S}[0,3]$ (which becomes $2$) and $\mathrm{B}[0,3]$ (which remains $0.0$). Since the set $\varnh[3]$ contains only node~$4$, the value $\mathrm{S}[5,3]$ is then set equal to $\mathrm{S}[4,3]=1$ (line~26 of Algorithm~\ref{algoBC}). At line~27 of Algorithm~\ref{algoBC}, the value $\mathrm{B}[5,3]$ is set equal to $1\cdot\frac{0+1}{2}=0.5$. Note that, at this moment, node~$5$ has correctly computed the number of shortest paths connecting node~$0$ to node~$3$, and the contribution of node~$3$ to its betweenness centrality (these values are shown in bold in the figure). It is easy to verify that the arrival of the message from node~$4$ does not change the state of node~$5$. Similarly, node~$1$ (which is symmetric to node~$5$) and node~$3$ update their state (once again, the updates are shown in bold in the figure).

\begin{figure*}
 \centering
 \includegraphics[scale=0.2]{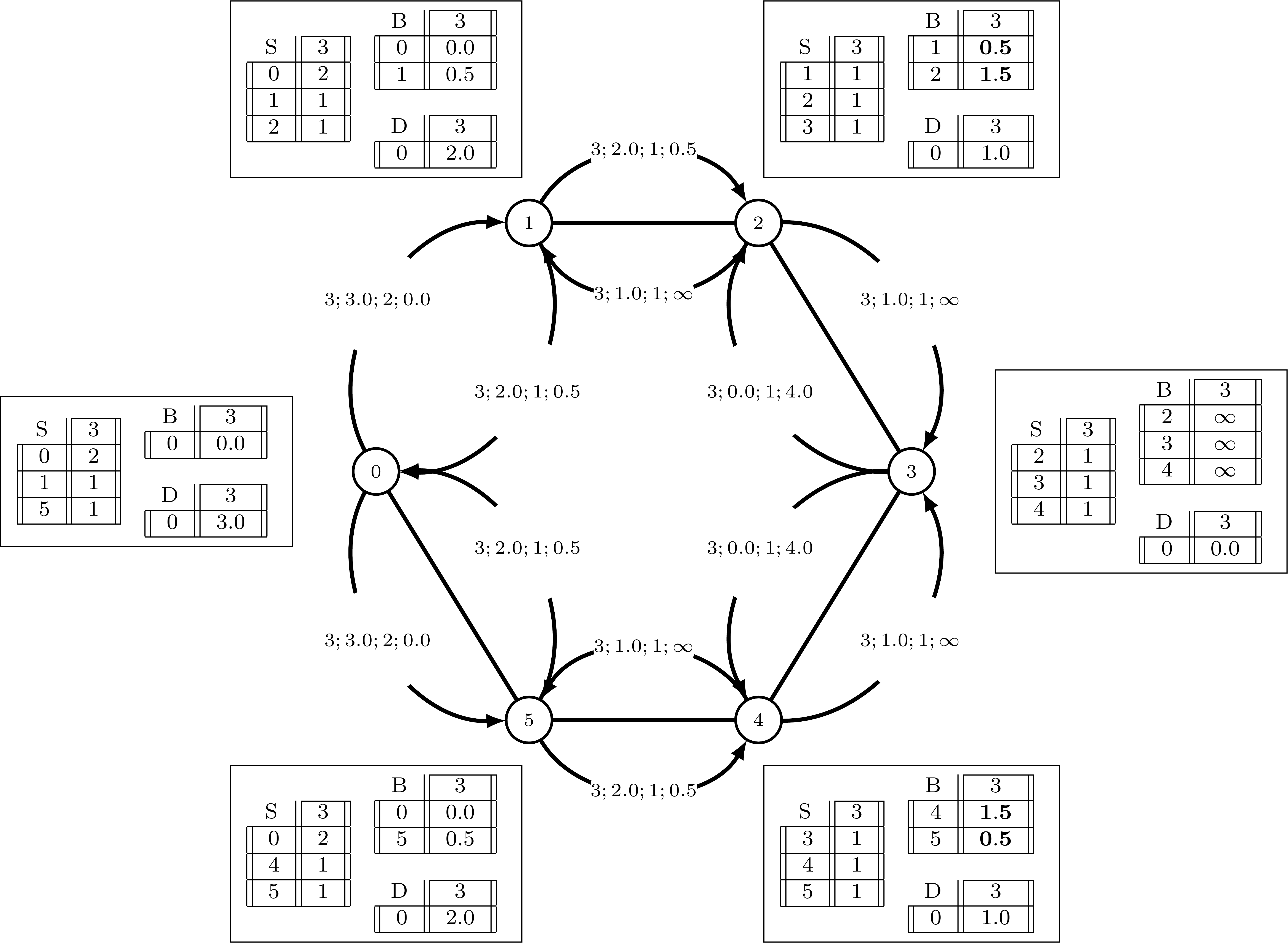}
 \caption{The state of Algorithm~\ref{algoBC} executed on a cycle of $6$ nodes, at the end of the sixth send/receive phase of the algorithm.} 
 \label{fig:execution06}
\end{figure*}

The execution of the next two phases of Algorithm~\ref{algoBC} are shown in Figure~\ref{fig:execution06} and~\ref{fig:execution07}, respectively. In particular, at the end of the seventh phase, each node $v$ has correctly computed, for each of its neighbors $u$, the number of shortest path connecting $u$ to node~$3$, and the contribution of node~$3$ to the betweenness centrality of $v$. 

\begin{figure*}
 \centering
 \includegraphics[scale=0.2]{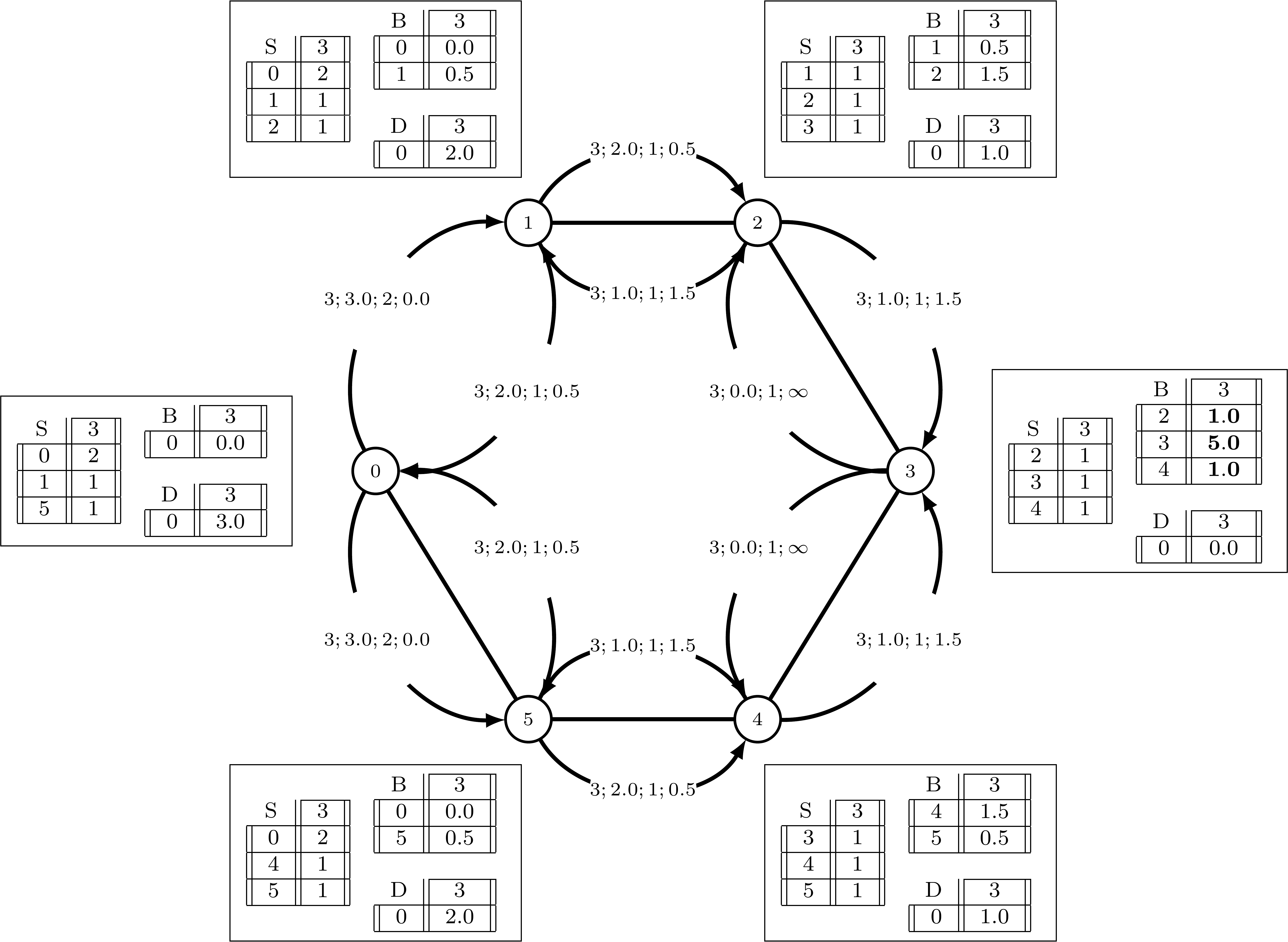}
 \caption{The state of Algorithm~\ref{algoBC} executed on a cycle of $6$ nodes, at the end of the seventh and last send/receive phase of the algorithm.}
 \label{fig:execution07}
\end{figure*}

\section{Implementation of Algorithm~\ref{algoBC} }

In this section, we present a more practical implementation of Algorithm~\ref{algoBC}, in which every node performs only a constant number of elementary operations upon reception of every message. The instructions performed at Lines~\ref{line:computeS} and~\ref{lineBCend} of Algorithm~\ref{algoBC} may, indeed, consume a large time, if $\nh_v(t)$ or $\ph_v(t)$ are large (note that the operations on the array $\varnh$ and $\varph$ can be implemented in (amortized) constant time using appropriate data structures). 

For this reason, we modify the function \textsc{receive} in order to accumulate values for the computation of $\sigma_{v,t}$ and $\bc_v(t)$ instead of summing up a potentially large number of values. This is done by the modified \textsc{receive} function described in Algorithm~\ref{algoBCfast}. In addition, the variable $C$ must be initialized to~$0$ in \textsc{init}, while the auxiliary array $A$, which stores the contribution of $u\in N(v)$ to $\bc_v(t)$, does not need to be initialized.

\begin{algorithm}[tb]
\caption{Implementation of Function \textsc{receive} of Algorithm~\ref{algoBC}, with constant number of elementary operations per message: instructions performed at node~$v$}
\label{algoBCfast}
\begin{algorithmic}[1]

\Function{receive}{message $(t,d,s,b)$ from $u\in N(v)$} 
\State \textbf{if} $t\neq v$ \textbf{then} $C \leftarrow C - B[v,t]$ \label{line:initC}
\If {$u\in \varnh[t]$}  \label{line:testNH2}
	\State $\varnh[t]\leftarrow \varnh[t] \smallsetminus \{u\}$ 
	\State \textbf{if} $t\neq v$ \textbf{then} $S[v,t]\leftarrow S[v,t]-S[u,t]$ \label{line:removeS}
\EndIf

\If {$u\in \varph[t]$}   \label{line:testPH2}
	\State $\varph[t]\leftarrow \varph[t] \smallsetminus \{u\}$ 
	\State $B[v,t]\leftarrow B[v,t]-A[u,t]$  \label{line:removeB}
\EndIf

\State $S[u,t] \leftarrow s$ 
\State $B[u,t] \leftarrow b$

\If{$d+w(\{u,v\})<D[t]$}
	\State $D[t]\leftarrow d+w(\{u,v\})$ 
\ElsIf{$d+w(\{u,v\}) = D[t]$} 
	\State $\varnh[t]\leftarrow \varnh[t] \cup \{u\}$ 
	\State \textbf{if} $t\neq v$ \textbf{then} $S[v,t]\leftarrow S[v,t]+S[u,t]$ \label{line:addS}
\ElsIf{$d - w(\{u,v\}) = D[t]$}
	\State $\varph[t]\leftarrow \varph[t] \cup \{u\}$ 
	\If{$S[u,t]\neq 0$} 
		\State $A[u,t] \leftarrow S[v,t] \cdot \frac{B[u,t]+1}{S[u,t]}$ 
	\Else 
		\State $A[u,t] \leftarrow 0$
	\EndIf
	\State $B[v,t]\leftarrow B[v,t]+A[u,t]$  \label{line:addB}
\EndIf 

\State \textbf{if} $t\neq v$ \textbf{then} $C \leftarrow C+B[v,t]$ \label{line:updateC}
\EndFunction 
\end{algorithmic}
\end{algorithm}

Algorithm~\ref{algoBCfast} aims at replacing the sums in the instructions
\begin{align*}
S[v,t] 	& \leftarrow 	 \sum_{x\in \varnh[t]}S[x,t] \\
B[v,t]		& \leftarrow 	 S[v,t]\cdot \sum_{x\in \varph[t]}\frac{B[x,t]+1}{S[x,t]}\\ 
C 		& \leftarrow 	 \sum_{x\neq v} B[v,x]
\end{align*}
in Algorithm~\ref{algoBC} by a bounded number of operations. The instruction $S[v,t] \leftarrow \sum_{x\in \varnh[t]}S[x,t]$ is replaced by removing $S[u,t]$ from $S[v,t]$ whenever $u$ must be removed from $\varnh[t]$ (cf. Line~\ref{line:removeS}), and adding $S[u,t]$ to $S[v,t]$ whenever $u$ must be added to $\varnh[t]$ (cf. Line~\ref{line:addS}). Similarly, the instruction $B[v,t] \leftarrow S[v,t]\cdot \sum_{x\in \varph[t]}\frac{B[x,t]+1}{S[x,t]}$ is replaced by removing from $B[v,t]$ the previously added contribution of $u$ to $B[v,t]$ stored in $A[u,t]$ (cf. Line~\ref{line:removeB}), whenever $u$ must be removed from $\varph[t]$, and adding the contribution $S[v,t]\cdot \frac{B[u,t]+1}{S[u,t]}$ to $B[v,t]$ whenever $u$ must be added to $\varph[t]$ (cf. Line~\ref{line:addB}), after having saved it into $A[u,t]$. Finally, the instruction $C \leftarrow \sum_{x \neq v} B[v,x]$ is replaced by exchanging the old value of $B[v,t]$ with the (potentially) new value computed at Line~\ref{line:addB} --- cf. Lines~\ref{line:initC} and~\ref{line:updateC}. 

\begin{figure*}
    \centering
    \includegraphics[scale=0.17]{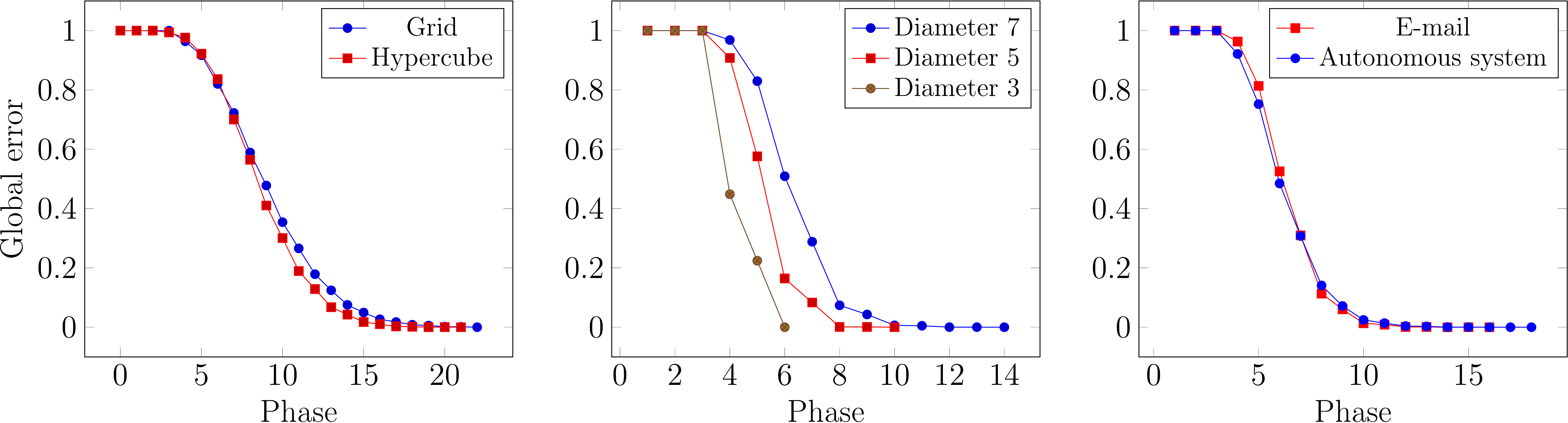}
    \caption{The (mean) global error as a function of time in (left) a $7\times 6$ grid and a hypercube of dimension 11 (where, hence, diameter is $11$), (center) Erd\"{o}s-Renyi graphs with 500 nodes and different diameters (20 samples for each diameter), and (right) an e-mail network with 1133 nodes and in an autonomous system network with 3011 nodes.}
    \label{fig:convergence}
\end{figure*}

\section{Experimental Results}

We have implemented a Java simulator in order to analyze the global and the local convergence times, when applying  Algorithm~\ref{algoBCfast} to lattice networks, to networks generated by different random graph models, and to real-world networks. The simulator uses a virtual clock and,  for each node, triggers a send-event once per virtual time unit (also called \textit{phase}). Unlike~\cite{MaccariGGMC18}, no random jitter has been added to the events scheduled by nodes, so the simulator analyzes the algorithm in the case of perfect synchronization. Each simulation ends when all nodes converge to steady state, that is, when the $C$ values do not change anymore. In order to perform the global convergence study, during each simulation we also record the phases at which the $D$ and the $S$ values do no change anymore. Finally, for the local convergence study we also record, for each node $v$, the last phase in which the $C$ value of $v$ changed. The output of our algorithms is compared with the output of the Python NetworkX library~\cite{networkx}.

\subsection{The datasets}

We have performed our convergence study on the following networks (in the rest of this section, we will present the results relative only to some of these networks, since the results on the others are very similar).

\begin{itemize}

\item Hypercubes of several different dimensions, grids with several different widths and heights, and complete binary trees of different heights.

\item Random Erd\"{o}s-R\'enyi graphs with different diameters~\cite{Erdos1959}, random Barab\'asi-Albert graphs with different numbers of links for each new node added to the graph~\cite{Albert2002}, and random geometric graphs with different communication ranges~\cite{Gilbert1961}.

\item Several real-world networks, such as an e-mail graph, whose edges indicate e-mail interchanges between members of the Univeristy Rovira i Virgili (Tarragona, Spain)~\cite{Guimera2003}; several autonomous system networks, which are communication networks of ``who-talks-to-whom'' obtained from the BGP logs~\cite{snapnets}; a road graph, which contains a large portion of the road network of the city of Rome, Italy, from 1999 (vertices correspond to intersections between roads, edges correspond to roads or road segments, and weights correspond to distances)~\cite{DIMACS2006}; and a co-authorship (both unweighted and weighted) graph~\cite{Newman2001}.
\end{itemize}

\subsection{Analysis of global convergence time}
We start by studying the global convergence over time. To this end, at each phase $P$, we study the $\ell_2$-norm of the distance between the current betweenness centrality computed by the nodes, $C=C(P)$, and the actual betweeenness centrality, $\bc$. This is normalized by the actual betweeenness centrality, which gives the following global-error formula:
\[\frac{\|\bc-C\|_2}{\|\bc\|_2}=\frac{\sqrt{\sum_{v\in V}(\bc_v-C[v])^2}}{\sqrt{\sum_{v\in V}(\bc_v)^2}}\]

\subsubsection{Unweighted networks: lattices, Erd\"{o}s-Renyi, e-mail}
Our first experiments are with unweighted, synthetic networks.
We start with a grid and a hypercube (Figure~\ref{fig:convergence} (left)), both with diameter~11.  Grids and hypercubes with different dimensions present very similar behaviours, and so do binary trees. In the first $3$ phases, the betweenness centrality of all nodes is $0$, since the list $\varph$ is empty. This phenomena is normal and predictable, and reoccurs in all our experiments, both in unweighted and weighted graphs.

As predicted by our analysis, the algorithm converges on both networks in  roughly $2\Diam(G)$ time. 
In an unweighted network, the estimate of $\bc$ made by each node can only increase: as time passes, each node learns about more shortest paths it belong to, increases its estimate of the betweenness centrality, and thus makes it more accurate, until converging to the right values. 
In these examples, the values computed by the nodes after $\Diam+2$ phases already give a relatively low error (less than~$10\%$).

Next, we study the convergence on Erd\"{o}s-Renyi graphs.
We average over 20 randomly-generated graphs with a given diameter, for diameters $3,5$ and $7$ (Figure~\ref{fig:convergence} (center)).
While slightly more cluttered, the results are similar to the ones observed in the lattices above. The betweenness centrality is not updated in the first 3 phases, and then rapidly and monotonically decreases, until full convergence in $2\Diam$ phases. Other random graphs present similar behavior.

The last two convergence analyses of unweighted graphs are the ones of the e-mail network and of the autonomous system network (Figure~\ref{fig:convergence} (right)). The convergence patterns are of similar nature, and are actually slightly more rapid then in the previous experiments --- the convergence time is still $2\Diam+1$, but the error after $\Diam$ phases is smaller.

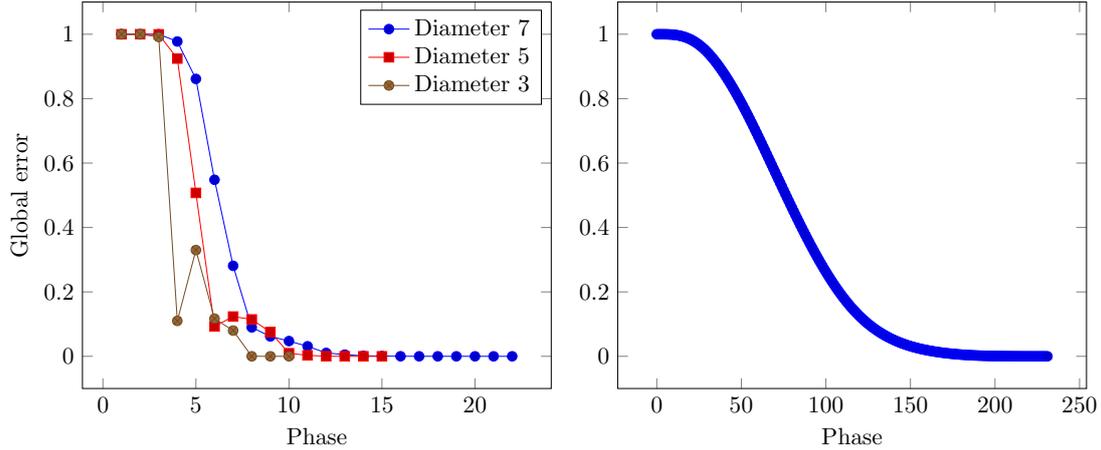
\begin{figure}[t]
    \renewcommand{\tikzscale}{0.9}
		\centering
		\begin{tikzpicture}[scale=\tikzscale, every node/.style={scale=\tikzscale}]
\begin{axis}[xlabel={Phase},ylabel={Global error}]
\addplot table{
1 1.0
2 1.0
3 0.9999295594709684
4 0.9775535310610618
5 0.8613227271752752
6 0.5480915517502833
7 0.28096191013559246
8 0.08978304694913726
9 0.06176359705060843
10 0.04735341151397893
11 0.031131868851746465
12 0.010584926223918881
13 0.005266229925090928
14 0.001566744792513643
15 7.308817278122513E-4
16 1.7961299587645617E-4
17 6.79682684135895E-5
18 1.6049376372173E-5
19 2.5714272298823894E-6
20 5.346995685704427E-8
21 1.4594047624050336E-8
22 0
};
\addplot table{
1 1.0
2 1.0
3 0.9996450470042676
4 0.9245610404109138
5 0.5073999596445494
6 0.0927653411362482
7 0.12347381473436277
8 0.11446097902966479
9 0.07547159946480825
10 0.009467307468918638
11 0.003088692920045613
12 1.9074554941137656E-4
13 3.603283345825361E-5
14 2.844548029480521E-15
15 2.8445194966306773E-15
};
\addplot table {
1 1.0
2 1.0
3 0.9914528840052613
4 0.11004444822604335
5 0.32986283597399585
6 0.11688105841785579
7 0.07988787343705513
8 7.458610708428173E-5
9 5.237240263010045E-6
10 3.152143236041534E-15
};
\legend{Diameter 7,Diameter 5,Diameter 3}
\end{axis}
\end{tikzpicture} \begin{tikzpicture}[scale=\tikzscale, every node/.style={scale=\tikzscale}]
\begin{axis}[xlabel={Phase}]
\addplot table{
0 1.0
1 1.0
2 0.999999989988012
3 0.9999805583840187
4 0.9999473110647462
5 0.9998660387374743
6 0.999749896008881
7 0.9995499273509799
8 0.9992883273584392
9 0.998903620726795
10 0.9984303556919242
11 0.9977954040339241
12 0.9970456487903111
13 0.9960967892870526
14 0.9950077240632043
15 0.9936839183796023
16 0.9921969859740033
17 0.9904393936606134
18 0.9884962456721951
19 0.9862511854072695
20 0.983805076098303
21 0.9810315871333538
22 0.9780480037487947
23 0.9747190534527325
24 0.971169250552361
25 0.9672616316222354
26 0.9631314856989206
27 0.9586342800021613
28 0.9539199668549353
29 0.9488331939197965
30 0.9435396010871231
31 0.9378777360845073
32 0.9320191303116642
33 0.925802119917978
34 0.9194111870233472
35 0.9126742442209222
36 0.9057795673950475
37 0.8985501929594267
38 0.8911908685789929
39 0.8835081469160643
40 0.8757155750908263
41 0.8676196228734973
42 0.8594271726989425
43 0.8509391877815552
44 0.8423667000420708
45 0.8335032328737546
46 0.8245604511825194
47 0.8153305746469383
48 0.8060343763810192
49 0.7964698911586755
50 0.7868568914824702
51 0.7769826245416445
52 0.7670662063250857
53 0.7569078764545922
54 0.7467187219337381
55 0.7362976359875005
56 0.7258646461071441
57 0.7152217435593304
58 0.7045864596937069
59 0.6937582771134538
60 0.6829571367772665
61 0.6719781163731982
62 0.6610426436733664
63 0.6499460818908637
64 0.6389021081888964
65 0.6277135081696078
66 0.6165907786204484
67 0.6053450633059294
68 0.5941839778562191
69 0.5829130482313021
70 0.5717403397969253
71 0.5604728574094029
72 0.549305111789621
73 0.5380503349755364
74 0.5269091977494869
75 0.5157035410797011
76 0.5046225784612457
77 0.4934901232113296
78 0.48249214703221366
79 0.47145996801045376
80 0.46057407188859295
81 0.44967117444466476
82 0.43892055431014615
83 0.42816328485623256
84 0.4175680801308486
85 0.4069766051831905
86 0.3965536059754112
87 0.38614939118695707
88 0.37592302867794847
89 0.3657282139753115
90 0.3557176782836445
91 0.3457514247202445
92 0.33597872259923045
93 0.32626045135925774
94 0.3167449069866389
95 0.307295465289693
96 0.29805100071859747
97 0.2888812216326318
98 0.27992243530194466
99 0.271049219515392
100 0.2623904176951968
101 0.2538250191147787
102 0.24547891004076314
103 0.2372359570875335
104 0.22921745070824753
105 0.22131259797694042
106 0.2136352379912872
107 0.20607643250175386
108 0.19874561966099552
109 0.19153799243870526
110 0.18455905609054674
111 0.1777092674810387
112 0.17108772334036998
113 0.16459475978777793
114 0.1583279681423258
115 0.15219378684355392
116 0.14628305962846433
117 0.1405055605590885
118 0.13494825727718182
119 0.12952370329088234
120 0.12431631850878651
121 0.11924028490683468
122 0.11437543122738979
123 0.10963656996025911
124 0.10509974676669383
125 0.1006825193469412
126 0.09645719728445198
127 0.09234586544009973
128 0.08841749242622833
129 0.08459707400361759
130 0.0809503651583182
131 0.07740544230704238
132 0.07402683470864492
133 0.07074431644540236
134 0.06762094070399946
135 0.06458782819594112
136 0.061703893332813405
137 0.05890253748102363
138 0.056241881595699615
139 0.053657469556696466
140 0.051204998658203466
141 0.04882264938481883
142 0.04656485012044648
143 0.04437169291027464
144 0.04229534312894827
145 0.04027865998571056
146 0.03837187969352738
147 0.036520404570245954
148 0.034771286641789335
149 0.03307261505072147
150 0.03147026575442329
151 0.02991398270380418
152 0.028447527061214978
153 0.027022536922336998
154 0.025680560628423537
155 0.024375975701516266
156 0.02314808507884207
157 0.021953469331146273
158 0.0208294928526983
159 0.01973475588685046
160 0.01870441531397645
161 0.017700333365403557
162 0.0167556039026418
163 0.015834261509621247
164 0.014967811762006985
165 0.014122712657995296
166 0.013328529603526276
167 0.012554241145432283
168 0.011827763174364833
169 0.01112005795052891
170 0.010457332423100775
171 0.009812554957169518
172 0.009210234070473349
173 0.008625101975418005
174 0.008080026926395879
175 0.007550896111704974
176 0.0070587459318671554
177 0.006580882551280887
178 0.006137546577973176
179 0.005707649788013504
180 0.005309633948786088
181 0.004923876982172576
182 0.004567264573349757
183 0.004221476794855157
184 0.0039024224818197274
185 0.003593472082509292
186 0.0033098046204630945
187 0.0030358310855969485
188 0.0027851760389057323
189 0.0025432454225250626
190 0.0023228136787870352
191 0.002110810934890829
192 0.0019196000708233169
193 0.0017368045907297936
194 0.0015738583334605379
195 0.0014187021378985216
196 0.0012812071204415351
197 0.0011506210496566513
198 0.001035690038287699
199 9.264328926357553E-4
200 8.3107568209387E-4
201 7.404608048002012E-4
202 6.616658406964696E-4
203 5.864960761125406E-4
204 5.210631726372802E-4
205 4.5830934739882694E-4
206 4.0402542581090565E-4
207 3.520040646211091E-4
208 3.0746813014591556E-4
209 2.6492310957959656E-4
210 2.287638358004362E-4
211 1.939833638577307E-4
212 1.6479706265441944E-4
213 1.367855980424849E-4
214 1.1382658800371637E-4
215 9.136780693042806E-5
216 7.312380515493414E-5
217 5.536733971381465E-5
218 4.216011825202302E-5
219 2.9956261788215304E-5
220 2.1972753640839512E-5
221 1.4829461418509392E-5
222 1.0773178165258488E-5
223 6.9350914661533766E-6
224 4.93911203597601E-6
225 2.834156535614241E-6
226 1.925182586034274E-6
227 8.479111129722255E-7
228 5.785197802857492E-7
229 1.5745314086745786E-7
230 1.1133618362650016E-7
231 9.665703362874488E-16
};
\end{axis}
\end{tikzpicture}
		\caption{The (mean) global error as a function of time in randomly weighted Erd\"{o}s-Renyi with 500 nodes (20 samples per diameter, where the noted diameter is the one of the underlying, unweighted graphs), and in a road network with 3353 vertices.}
	\label{fig:convergence_weighted}
\end{figure}

\subsubsection{Weighted networks: Erd\"{o}s-Renyi, road network}

We now turn to describe our experiments with weighted Erd\"{o}s-Renyi graphs (Figure~\ref{fig:convergence_weighted} (left)). We build the graphs as before, and then assign random weights form the set $\{1,2,5\}$ to each edge, with expected weight $2$. The shapes of the convergence curves are generally similar to the unweighted case, with one interesting difference: the global error is no longer monotonically decreasing, and, instead, it sometimes increases, but always before $\Diam$ phases. This phenomena is explained by the fact that the algorithm first finds paths of the least number of hops, but then updates them to paths with more hops but less weight. When such an update occurs, $S[v,t]$ may decrease, as many few-hops paths are replaced by a few (or one) lighter paths. This causes $B[v,t]$ to temporarily decrease, and the error to increase. This phenomena is moderated when the diameter grows larger.

A real world, large scale weighted network we study is composed of a large portion of the road network of Rome, represented by a weighted graph (Figure~\ref{fig:convergence_weighted} (right)). In this network, whose diameter $\Diam(G)$ is very high (roughly 120 hops), we observe a very smooth convergence pattern, reminding the clean convergence patterns of the unweighted lattices (Figure~\ref{fig:convergence} (left)).
This could be an artifact of the large diameter or of the larger number of different weights.

\subsection{Analysis of local convergence time}
A different perspective on the convergence time is given by considering the number of nodes which converged completely over time. 
We consider two values: $T_D$, the time it takes for the Bellman-Ford algorithm to converge locally, i.e., until the distances are correctly computed; and, $T_C$, the time it takes for the betweenness centrality value to converge.
A third value of interest is the convergence time of $S[v,t]$, the number of shortest path. However, in unweighted graphs our algorithm always computes this value exactly one round after the convergence of the length of the shortest paths ($T_D$). In weighted graphs, experiments show  that this is almost always true as well, so we do not include this value in our plots.

\subsubsection{Unweighted networks: Erd\"{o}s-Renyi and e-mail}
We study $T_D$ and $T_C$ for all the nodes of three unweighted Erd\"{o}s-Renyi graphs (Figure~\ref{fig:er-scatterplot}), and for the e-mail and the autonomous system networks (Figure~\ref{fig:email-as-scatterplot}).
All these networks present a somewhat similar behaviour. 
As predicted, all distances converge after $\Diam$ phases, and the betweenness centrality values converge after $2\Diam+1$ phases. 
The distribution of the convergence times of $T_D$ reassembles a Poisson distribution: most of the nodes, and especially nodes with a large betweenness centrality, correctly compute all their distances relatively fast --- these are central nodes.
Few, peripheral nodes, take more time to compute their distances correctly, and these nodes have lower betweenness centrality.
The \emph{eccentricity} of a node is the maximal distance from it to all other nodes, and it is always  between $\diam/2$ and $\diam$. The time $T_D$ it takes a node to find all its distances to other nodes is thus exactly its eccentricity (up to a small additive constant), which better explains why $T_D$ is always between $\diam/2$ and $\diam$. Note that in unweighted graphs, $\Diam=\diam$.

\begin{figure}[t]
    \centering
    \includegraphics[scale=.95]{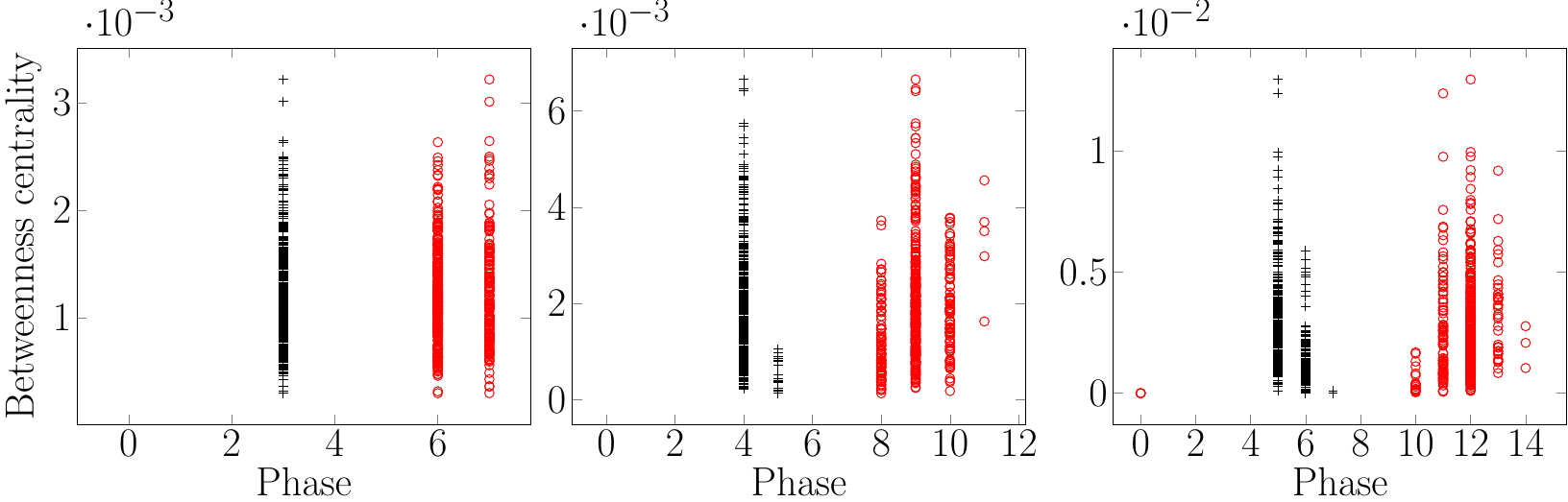}
    \caption{$T_D$ and $T_C$ for all the nodes of three Erd\"{o}s-Renyi graphs with diameter 3, 5, and 7, respectively.}
    \label{fig:er-scatterplot}
\end{figure}

In the case of the Erd\"{o}s-Renyi graphs, when the random network has small diameter (3 and 5 in our examples) and is very dense, each node lies at least on one shortest path, and they all have non-zero betweenness centrality. In networks of larger diameter (Erd\"{o}s-Renyi with diameter 7, e-mail network and autonomous system network), we notice that some nodes do not lie on any shortest path --- these nodes' initial estimate of their betweenness centrality, $0$, is correct, and thus they also have $T_C=0$, appearing in the lower-left corner of the figures.

The convergence time of $T_C$ reassembles a normal distribution: nodes with low betweenness centrality may take longer or shorter times than nodes with very high betweenness centrality to compute their betweenness centrality values. 
This phenomena can also be explained by studying the eccentricity of the nodes: we have found that in general, nodes with large betweenness centrality have low eccentricities, of roughly $\Diam/2$ which causes faster convergence.
For example, in the case of the autonomous system network, the average eccentricity of all nodes is approximately~$6.6$, while the average eccentricity of the ten nodes with highest betweenness centrality is roughly~$5.5$ (note that this network has diameter $9$, so no node can have eccentricity smaller than $5$).

\begin{figure}
    \centering
    \includegraphics[scale=0.25]{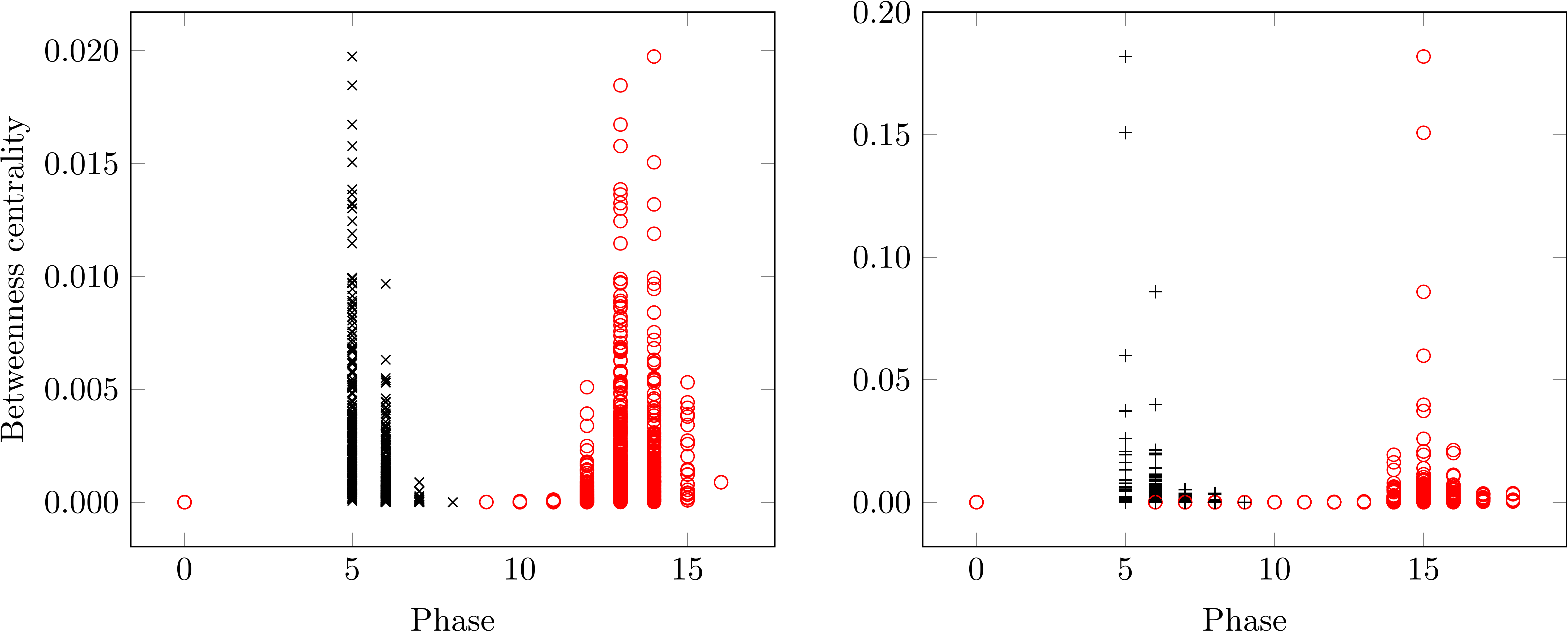}
    \caption{$T_D$ and $T_C$ for all the nodes of the e-mail network (left) and of the autonomous system network (right).}
    \label{fig:email-as-scatterplot}
\end{figure}

Thus, after the computation of all distances and numbers of shortest paths, which takes roughly $\Diam$ phases, nodes of high betweenness centrality take only $\Diam/2$ more phases to compute their betweenness centrality. Nodes with low betweenness centrality may have different eccentricities, which explains the wider scatter of their convergence times.
To better understand the betweenness centrality convergence times $T_C$, we also check the \emph{number of nodes} that converge in each phase, as shown, for the case of the autonomous system network, in Figure~\ref{fig:as-bar}. In this figure, we omit roughly $65\%$ of the nodes, that have $\bc=0$ and converge in $0$ phases. The rest of the nodes present a normal distribution, where most nodes converge in roughly $\frac{3}{2}\Diam$ phases.

\begin{figure}[h]
    \centering
    \includegraphics[scale=1]{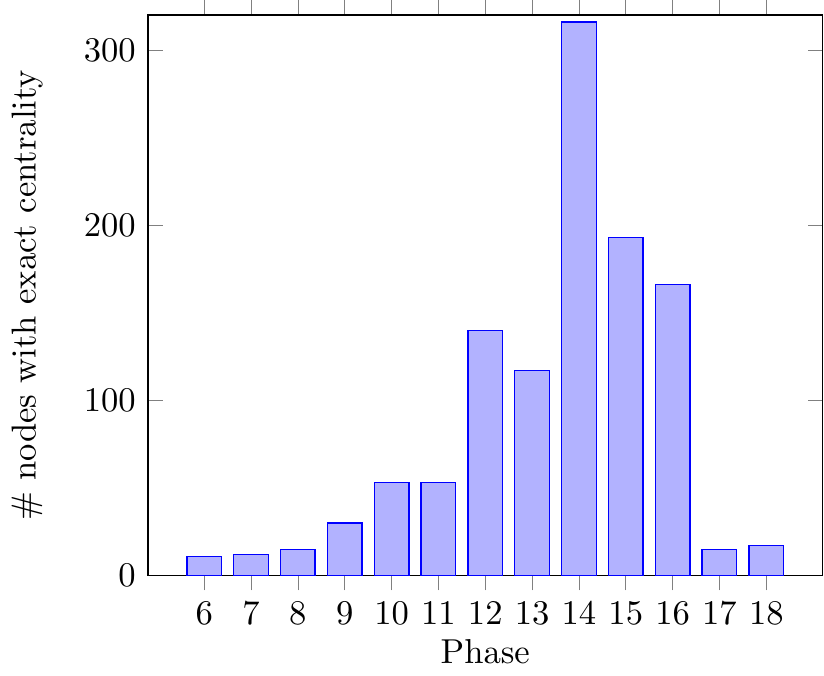}
    \caption{The number of nodes whose $C$ value converges to the exact centrality value as a function of the phase, in the case of the autonomous system network.}
    \label{fig:as-bar}
\end{figure}

\subsubsection{Weighted networks: road network}
Finally, we revisit the weighted graph representing the road network of Rome. The scatter of betweenness centrality values and convergence times of $T_D$ and $T_C$ give a distribution very similar to the ones of the unweighted networks (see Figure~\ref{fig:rome-scatt-bar} (left)). It can be observed that the convergence times of the distances, $T_D$, of nodes with high betweenness centrality is low --- roughly $\Diam/2$ --- due to their low eccentricity. Other node take also time proportional to their  eccentricity, which is between $\diam/2$ and $\diam$ (this is a graph theoretic fact).
The convergence time of the betweenness centrality, $T_C$, is again scattered around $\frac{3}{2}\Diam$, with nodes of high betweenness centrality indeed converging in roughly this time, while other nodes may also converge faster or slower.

\begin{figure}
    \centering
    \includegraphics[scale=0.82]{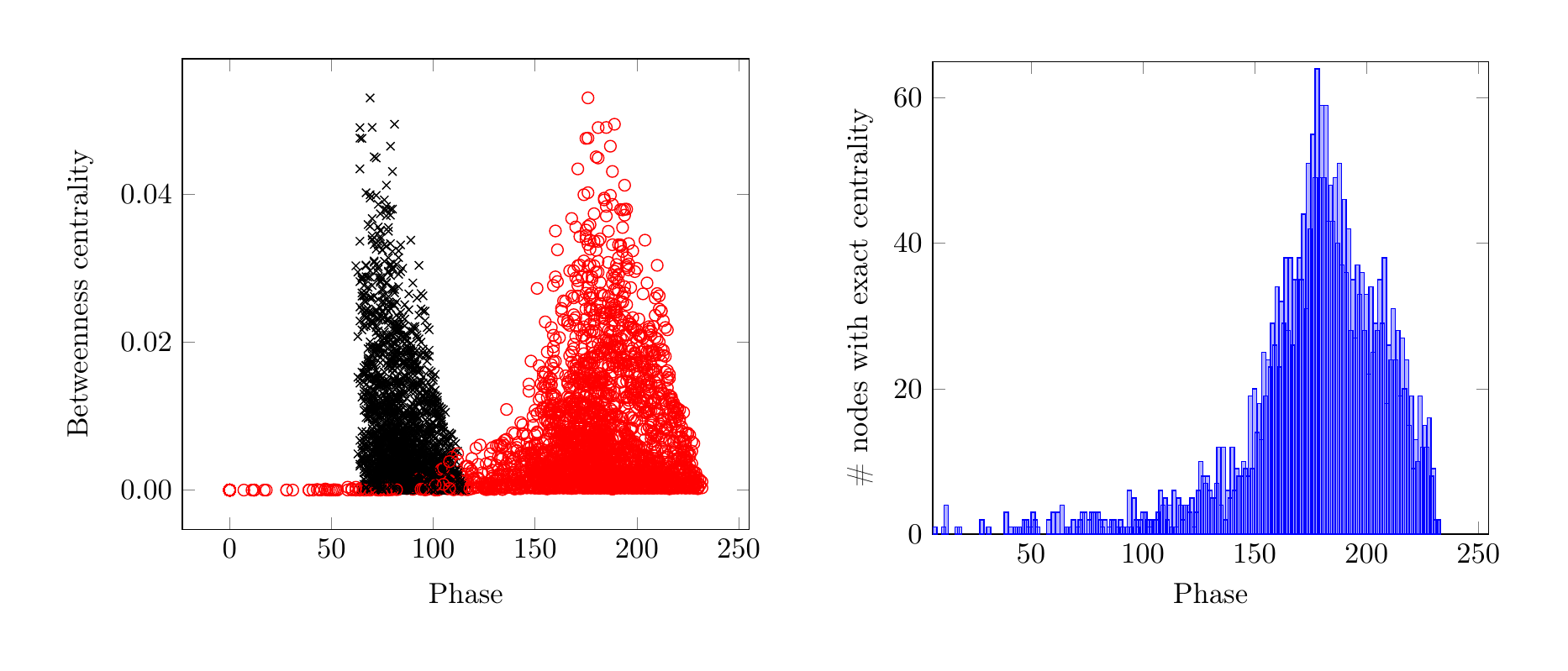}
    \caption{$T_D$ and $T_C$ for all the nodes (left) and number of nodes whose $C$ value converges to the exact centrality value as a function of time (right), in the case of the road network.}
    \label{fig:rome-scatt-bar}
\end{figure}

A unique phenomena observed here is that of nodes with betweenness centrality of~$0$, who's $T_C$ is \emph{not~$0$}. This is a result of paths going through these nodes, which are not shortest paths (in terms of weight), but seem to be so before enough hops are considered. All these nodes have their $T_C$ less than $\diam$, since in this time all shortest path distances are correct.

Once again, we also check how many nodes converge in each phase (see Figure~\ref{fig:rome-scatt-bar} (right)). Here, roughly half the nodes converge in $0$ rounds, with $\bc=0$. The rest of the nodes 
present a behaviour similar to before --- most of them take roughly $\frac{3}{2}\Diam$ phases to converge, while a few take longer or shorter, but never more than $2\Diam+1$.

\section{Conclusion}

In this paper, we have shown that betweenness centrality can be easily computed with only minimal modifications to the classical Bellman-Ford algorithm, and, hence, be integrated into distance-vector routing protocols. We have also presented an experimental analysis of the global and local convergence time of the proposed algorithm.

An interesting question left open by our work is designing bandwidth-adaptive distributed algorithms for betweenness centrality, assuming that the graph has a polynomial number of shortest paths (if this not the case, which would cause the $S$-values to have a linear number of bits, we could use a floating point representation with $O(\log n)$ bits to approximate the $S$-values, as described in~\cite{Hua2016}). Specifically, let $\congest(B)$ be the variant of the \congest\/ model described in Section~\ref{subsec:RW}, in which at most $B$ words of $O(\log n)$ bits each can be sent through each link at each round. The distributed algorithm for weighted graphs described in this paper applies to the $\congest(n)$ model, and converges in $O(\Diam)$ rounds. The distributed algorithms for unweighted graphs described in~\cite{HoangPDGYPR19,PR2018,HuaFAQLSJ2016} apply to the $\congest(1)$ model, and converge in $O(n)$ rounds. We thus leave open the question of computing betweenness centrality of weighted graphs in $O(n/B+\Diam)$ rounds in the $\congest(B)$ model, for every $1\leq B \leq n$.

\let\OLDthebibliography\thebibliography
\renewcommand\thebibliography[1]{
	\OLDthebibliography{#1}
	\setlength{\parskip}{0pt}
	\setlength{\itemsep}{0pt plus 0.3ex}
}
\bibliographystyle{plain}
\bibliography{bc}

\end{document}